\documentclass{article}

\usepackage[affil-it]{authblk}
\usepackage[dvipsnames]{xcolor}
\usepackage{amsfonts}
\usepackage{amsmath,amsthm,amssymb,dsfont,centernot}

\usepackage{enumerate}
\usepackage{graphicx}	
\usepackage{subcaption}
\usepackage[margin=3cm]{geometry}
\usepackage{url}
\usepackage{todonotes}
\usepackage{bbm}

\usepackage{tikz}

\usepackage{booktabs}

\usepackage{pifont}
\usepackage{multirow}
\usepackage{makecell}

\usepackage{epsfig}
\usetikzlibrary{shapes.symbols,patterns} 
\usepackage{pgfplots}
\pgfplotsset{compat=1.10}
\usepgfplotslibrary{fillbetween}

\definecolor{linkblue}{HTML}{001487}
\usepackage{hyperref}
\hypersetup{
  colorlinks=true,
  citecolor=linkblue,
  linkcolor=linkblue,
  filecolor=linkblue,
  urlcolor=linkblue,
  breaklinks=true,
  pdftitle={Relative entropy representations via tracial joint spectral measures},
  pdfauthor={Lukas Schmitt}
}

\usepackage{nicefrac}
\usepackage{mathtools}

\usepackage{thmtools}
\hypersetup{hypertexnames=false}

\usepackage{algorithm}
\usepackage{algorithmic}

\usepackage{mdframed}
\usepackage{aligned-overset}
\usepackage{circuitikz}

\theoremstyle{plain}
\newtheorem{theorem}{Theorem}[section]
\newtheorem{lemma}[theorem]{Lemma}

\newtheorem{corollary}[theorem]{Corollary}
\newtheorem{proposition}[theorem]{Proposition}

\theoremstyle{definition}

\newtheorem{remark}[theorem]{Remark}
\newtheorem{example}[theorem]{Example}

\usepackage{tcolorbox}
\tcbuselibrary{breakable}
\tcbset{breakable}
\tcolorboxenvironment{definition}{}
\tcolorboxenvironment{theorem}{colback=pink!25!white,colframe=pink!100!black}
\tcolorboxenvironment{lemma}{colback=yellow!5!white,colframe=yellow!75!black}
\tcolorboxenvironment{corollary}{colback=Dandelion!5!white,colframe=Dandelion!75!black}
\tcolorboxenvironment{proposition}{colback=Emerald!5!white,colframe=Emerald!75!black}
\tcolorboxenvironment{claim}{colback=RoyalBlue!5!white,colframe=RoyalBlue!75!black}
\tcolorboxenvironment{conjecture}{colback=red!5!white,colframe=red!75!black}
\tcolorboxenvironment{example}{colback=white,colframe=lightgray}
\tcolorboxenvironment{remark}{colback=white,colframe=lightgray}
\tcolorboxenvironment{question}{colback=lightgray!5!white,colframe=lightgray!75!black}

\DeclareRobustCommand{\abbrevcrefs}{%
\Crefname{theorem}{Thm.}{Thms.}%
\Crefname{corollary}{Cor.}{Cors.}%
\Crefname{lemma}{Lem.}{Lems.}%
\Crefname{remark}{Rmk.}{Rmks.}%
\Crefname{proposition}{Prop.}{Props.}%
\Crefname{equation}{Eq.}{Eqs.}%
\Crefname{example}{Ex.}{Exs.}%
}

\DeclareRobustCommand{\cshref}[1]{{\abbrevcrefs\cref{#1}}}

\newcommand{\nll}{\centernot{\ll}}

\newcommand*{\ee}{\mathrm{e}}

\newcommand*{\cH}{\mathcal{H}}

\newcommand*{\cT}{\mathcal{T}}

\newcommand*{\R}{\mathbb{R}}

\newcommand*{\id}{\mathds{1}}

\newcommand*{\spec}{\mathrm{spec}}
\newcommand*{\supp }{\mathrm{supp}}
\newcommand*{\tr}{\mathrm{tr}}
\newcommand*{\ket}[1]{| #1 \rangle}
\newcommand*{\bra}[1]{\langle #1 |}

\newcommand{\proj}[1]{|#1\rangle\!\langle #1|}

\newcommand*{\D}{\mathrm{D}}

\newcommand*{\di}{\mathrm{d}} 

\newcommand{\norm}[1]{\left\lVert#1\right\rVert}

\newcommand*{\meas}{\, \di \mu_{A,B}(a,b)}

\usepackage{float}
\usepackage[nameinlink,capitalize,noabbrev]{cleveref}

 \allowdisplaybreaks

\title{Relative entropy representations via tracial joint \\ spectral measures }
\author{Lukas Schmitt$^{1,2}$}
\affil{\small $^{1}$Institute for Theoretical Physics, ETH Zurich\\
  $^{2}$IBM Quantum, IBM Research Europe -- Zurich
 }
 \date{}

\begin{document}

\maketitle
\begin{abstract}
We show that for positive semidefinite $A$ and $B$ the relative entropy can be written as $D(A\|B)=2\int a\log(a/b)\,\di\mu_{A,B}(a,b)$, where $\mu_{A,B}$ denotes Heinävaara's tracial joint spectral measure. 
This allows us to obtain  a unified derivation of several known integral representations of relative entropy from scalar equalities. Scalar inequalities can be used in the same way to derive Pinsker-type bounds. 
We further show that the testing curve $t\mapsto\tr[A-tB]_+$ is piecewise affine exactly when $A$ and $B$ commute, which is equivalent to their testing region being a polygon. This gives a characterization of noncommutativity through the curvature. 
As further applications, we derive inequalities for relative entropy variance and the loss of relative entropy under positive trace-preserving maps.
\end{abstract}

\section{Introduction}
\label{sec:intro}

For positive semidefinite matrices $A,B\in M_n(\mathbb C)$, the quantum relative entropy is given by
\begin{equation}\label{eq:Ddef}
    D(A\|B) \coloneqq \tr[A(\log A - \log B)] \, ,
\end{equation}
with $D(A\|B)=+\infty$ unless $\supp \, A\subseteq\supp \, B$. Recently, many useful integral representations of this quantity were found, expressing it in terms of the expression $A-sB$ \cite{Frenkel_2023,Hirche_2024,liu2025layercakerepresentationsquantum}. 
Although these formulas can vary a lot, we show that they can be related by scalar equalities using Heinävaara's tracial joint measure.
Heinävaara~\cite{Heinavaara2023} assigns to every Hermitian pair $(A,B)$ a positive measure $\mu_{A,B}$ on $\mathbb R^2$ such that
\begin{equation}\label{eq:intro_master}
    \tr\bigl[H(f)(xA+yB)\bigr] = \int_{\mathbb{R}^2} f(ax+by)\, \di\mu_{A,B}(a,b) 
    \quad \textnormal{with} \quad H(f)(z) \coloneqq \int_0^1 \frac{1-t}{t}\, f(zt)\, \di t \,,
\end{equation}
for a large class of functions $f$ and every $x,y\in\mathbb R$ (see \cref{thm:tracialjoint}). Remarkably, the measure is fixed by $(A,B)$ and does not depend on $f$. Therefore, it can translate between functions of the matrix expression $xA+yB$ and functions of the scalar expression $xa+yb$.

Despite not being of the form of~\cref{eq:intro_master}, expressions such as relative entropy can also be written using this measure (see~\cref{thm:relativ_int}):
\begin{equation}
    D(A\|B)=2\int_{\mathbb R^2}a\log\frac ab\,\di\mu_{A,B}(a,b) \, .
    \label{eq:intro_relative}
\end{equation}
This allows us to find a variety of integral representations of the relative entropy just from simple calculus. 

In addition, it lets us relate the quantum relative entropy of two states to the classical relative entropy of two probability distributions.
Together with 
$\tr A=2\int a\,\di\mu_{A,B}$ and $\tr B=2\int b\,\di\mu_{A,B}$, one can define distributions $\di P_{A,B}=2a\,\di\mu_{A,B}$ and $\di Q_{A,B}=2b\,\di\mu_{A,B}$
that fulfill $D(A\|B) = D(P_{A,B},Q_{A,B})$.

These distributions have exactly the same binary-testing region as the quantum pair $(\rho,\sigma)$. 
The shape of this testing region keeps some of the noncommutative information of $\rho$ and $\sigma$ and we show that it is a polygon exactly when $\rho$ and $\sigma$ commute. 
This is equivalent to the testing curve $t\mapsto\tr[A-tB]_+$ being piecewise affine, answering an open question in~\cite{hiai2026hockeystick}.

The central mechanism of this paper is to turn scalar identities into trace identities and we summarize our main tools in~\cref{tab:tracial-joint-measure-dictionary}. 

\begin{table}[htpb]
\centering
\renewcommand{\arraystretch}{1.35}
\begin{tabular}{p{0.34\textwidth} p{0.40\textwidth} l}
\toprule
\textbf{Trace expression} & \textbf{Scalar quantity} & \textbf{Reference} \\
\midrule

$\tr[xA+yB]$ & $\displaystyle 2\int (xa+yb)\, \di\mu_{A,B}$ & \cref{thm:tracialjoint} \\

$\tr\bigl[(xA+yB)_{\pm}\bigr]$ & $\displaystyle 2\int (xa+yb)_{\pm}\, \di\mu_{A,B}$ & \cref{prop:basic} \\

$D(A\|B)$ & $\displaystyle 2\int a\log\frac{a}{b}\, \di\mu_{A,B}$ & \cref{thm:relativ_int} \\

$\tr\bigl[A\, K_s[B]\bigr]$ & $\displaystyle \int a\, \ee^{-sb}\, \di\mu_{A,B}$ & \cref{thm:diffdict} \\

$\tr\bigl[V\, \D\log(S)[W]\bigr]$ & $\displaystyle 2\int \frac{(ua+vb)(pa+qb)}{xa+yb}\, \di\mu_{A,B}$ & \cref{thm:diffdict} \\

$\tr\bigl[(uA+vB)\, \id_{\{A<sB\}}\bigr]$ & $\displaystyle 2\int  (ua+vb)\,\id_{\{a<sb\}}\, \di\mu_{A,B}$ & \cshref{thm:diffdict} \\
\bottomrule
\end{tabular}
\caption{
Scalar--operator dictionary with $S=xA+yB$, $V=uA+vB$,
$W=pA+qB$, and $K_s[X]=\int_0^1(1-r)\ee^{-rsX}\di r$. The precise 
assumptions are stated with the corresponding results.}
\label{tab:tracial-joint-measure-dictionary}
\end{table}

Most integral representations of the relative entropy presented here are known in equivalent form (see \cite{Frenkel_2023,Hirche_2024,liu2025layercakerepresentationsquantum}). 
Our contribution is mostly to lay the groundwork for a general application of Heinävaara's tracial joint measure for usage in quantum information.
To illustrate its usefulness, we further derive inequalities for relative entropy variance and the loss of relative entropy under positive trace-preserving maps. 

The paper is organized as follows.
\Cref{sec:tjsm} introduces $\mu_{A,B}$ and the first two equations of~\cref{tab:tracial-joint-measure-dictionary}. \Cref{sec:relent} proves
\cref{eq:intro_relative}, its classical reformulation and adds the last three entries of~\cref{tab:tracial-joint-measure-dictionary}. \Cref{sec:intrep} derives various integral representations, and \cref{sec:lowerbounds} lifts scalar
inequalities to tracial inequalities. Finally, \Cref{sec:testing} develops the binary-testing
interpretation, the curvature criterion, and the two trace inequalities. 
Further proof details are found in~\cref{sec:appdiff,sec:appscalar,sec:appderiv,sec:apptesting}.

\section{The tracial joint spectral measure}
\label{sec:tjsm}
\subsection{Preliminaries}
Throughout this work, we restrict ourselves to finite-dimensional Hilbert spaces  and use $\log(x)$ as the natural logarithm of $x$. Furthermore, we denote the functions
$x_+ = \max\{x,0\}$ and $x_- = \max\{-x,0\}$,
so that $x=x_+-x_-$, $|x|=x_++x_-$ and $x_\pm = \tfrac12(|x|\pm x)$. For a Hermitian matrix $X$
we write $X_\pm$ for its positive and negative part and often abbreviate $\tr[X_\pm]$ as $\tr[X]_\pm$.
In addition, we set $\id_{\{X<0\}}$ to be the spectral projection of $X$ onto $(-\infty,0)$ and use the notation
$\id_{\{X<Y\}} \coloneqq \id_{\{X-Y<0\}}$ for Hermitian $X,Y$. Finally, we use $\norm{\cdot}$ for the operator norm and
$\norm{\cdot}_1$ for the trace norm.

We begin by recalling Heinävaara's tracial joint measure.
\begin{theorem}[{\cite[Theorem 1.4]{Heinavaara2023}}]\label{thm:tracialjoint}
Let $A,B\in M_n(\mathbb C)$ be Hermitian. There is a positive measure $\mu_{A,B}$ on
$\mathbb R^2$ with $\mu_{A,B}[(0,0)]=0$ and the following property. If $f:\mathbb R\to\mathbb R$ is measurable, $f(0)=0$ and
\begin{equation}
    \int_{-M}^{M}\left|\frac{f(t)}{t}\right|\di t<\infty
    \qquad\text{for every }M>0,
\end{equation}
then, with
\begin{equation}
    H(f)(z)=\int_0^1\frac{1-t}{t}f(zt)\di t,
\end{equation}
one has
\begin{equation}
    \tr\bigl[H(f)(xA+yB)\bigr]
    =\int_{\mathbb R^2}f(xa+yb)\,\di\mu_{A,B}(a,b)
\end{equation}
for every $x,y\in\mathbb R$.
\end{theorem}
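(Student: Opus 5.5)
The plan is to recast the statement as a problem about the one-dimensional projections of $\mu_{A,B}$, and then to build the measure by inverting these projections.

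Fix a direction $(x,y)$, write $\lambda_1(x,y),\dots,\lambda_n(x,y)$ for the eigenvalues of $xA+yB$, and substitute $s=\lambda t$. This gives
\begin{equation*}
    H(f)(\lambda)=\int f(s)\,\di\nu_\lambda(s), \qquad \di\nu_\lambda(s)=\Bigl(1-\frac{s}{\lambda}\Bigr)\frac{\id_{[0,\lambda]}(s)}{|s|}\,\di s ,
\end{equation*}
where $[0,\lambda]$ means the segment between $0$ and $\lambda$, and $\nu_0=0$. The claimed identity is therefore equivalent to the following requirement: for every $(x,y)$, the pushforward of $\mu_{A,B}$ under $(a,b)\mapsto xa+yb$ equals $\sum_i\nu_{\lambda_i(x,y)}$ on $\mathbb R\setminus\{0\}$.

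The integrability hypothesis on $f(t)/t$, together with $f(0)=0$, is exactly what makes all these integrals finite. The singularity of $\nu_\lambda$ at $0$ forces $\mu_{A,B}$ to be locally infinite near the origin. Consequently I would construct the finite measure $|(a,b)|\,\di\mu_{A,B}$, or $(a^2+b^2)\,\di\mu_{A,B}$, rather than $\mu_{A,B}$ itself.

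The first step is a consistency check via moments. For $f(t)=t^k$ with $k\ge1$ one gets $H(f)(z)=z^k/(k(k+1))$. So the requirement reads
\begin{equation*}
    \int (xa+yb)^k\,\di\mu_{A,B}=\frac{\tr[(xA+yB)^k]}{k(k+1)} .
\end{equation*}
The right-hand side is a homogeneous polynomial of degree $k$ in $(x,y)$, which is precisely the compatibility condition needed for a family of projections to come from a single signed measure on $\mathbb R^2$ (Cramér--Wold together with the Radon transform). Since all supports are bounded by $\norm{A}+\norm{B}$, moments determine measures. Hence there is at most one candidate $\mu_{A,B}$ as a compactly supported signed measure, and it can be defined distributionally by inverse Radon transform of the projection family.

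The second step checks the theorem for one function and extends it. The case $f=|t|$ yields $\tr|xA+yB|=2\int|xa+yb|\,\di\mu_{A,B}$, which identifies $\mu$ as a generating measure of the planar zonoid whose support function is the trace norm. Using $f(t)=t_+$ and $f(t)=(t-c)_+$, and then linearity plus monotone approximation, I would extend from polynomials and these hinge functions to all measurable $f$ satisfying the integrability condition.

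The main obstacle is positivity of the reconstructed measure. My first attempt is to treat generic pairs $(A,B)$, where the eigenvalue branches $\lambda_i(\cos\theta,\sin\theta)$ are real-analytic in $\theta$ away from finitely many crossings. There I would write the inverse Radon transform explicitly as a sum over branches of densities supported on the regions swept out by the segments $[0,\lambda_i]$. I would then try to show these densities are nonnegative from convexity: by Ky Fan, $\theta\mapsto\sum_{i\le k}\lambda_i^\downarrow$ and $\tr[xA+yB]_+$ are convex support functions. Nonnegativity of the curvature of these support functions, interpreted as a density on the circle, should give nonnegativity of the angular part of $\mu$.

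Non-generic pairs would then be handled by approximation $(A_m,B_m)\to(A,B)$. The measures $|(a,b)|\,\di\mu_{A_m,B_m}$ have uniformly bounded mass, since by the zonoid identity $\tr|A_m|+\tr|B_m|\ge\int|a|+|b|$, so a weak-$*$ limit point exists. By uniqueness from the moment step, that limit must be $\mu_{A,B}$, and positivity passes to the limit.
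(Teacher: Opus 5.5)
The paper does not prove this statement; it quotes it from Heinävaara's thesis. Your plan can therefore only be checked against what the theorem requires, and it has a genuine gap exactly at the step you flag as the main obstacle.

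The parts that work are soft. Your reduction to projections, the moment computation and the approximation scheme are fine in outline, and a distributional object with the prescribed projections can be produced and pinned down away from the origin by Fourier/Radon inversion. The entire content of the theorem is that this object is a \emph{positive measure}, and your convexity heuristic cannot deliver that.

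The functionals $\tr|xA+yB|$ and $\tr[xA+yB]_+$ are homogeneous of degree one. They therefore see only the pushforward of $|z|\,\di\mu_{A,B}(z)$ to the unit circle. Nonnegative curvature of these support functions gives at best a positive angular marginal. Such a marginal exists for any planar norm, since every centrally symmetric planar convex body is a zonoid. It gives no control over the radial profile of $\mu_{A,B}$ along each ray, which is where negativity would show up. The Ky Fan sums $\sum_{i\le k}\lambda_i^\downarrow$ with $1<k<n$ are not of the form $\tr[g(xA+yB)]$, so they are not linear functionals of $\mu_{A,B}$ at all. Moreover, the inverse Radon transform is nonlocal, so the density at a point is not governed by local curvature of the eigenvalue curves.

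More decisively, every convexity fact you invoke depends only on the pencil $xA+yB$, not on the weight $(1-t)/t$. It holds verbatim for the untransformed problem: find $\eta\geq0$ with $\tr[g(xA+yB)]=\int g(xa+yb)\,\di\eta$ for all $g$ with $g(0)=0$. That problem already fails for $A=\mathrm{diag}(1,-1)$, $B=\bigl(\begin{smallmatrix}0&1\\1&0\end{smallmatrix}\bigr)$. For unit $(x,y)$ the matrix $xA+yB$ has eigenvalues $\pm1$, so the projections of $\eta$ onto both axes and both diagonals must equal $\delta_1+\delta_{-1}$ away from $0$. The $a$-axis projection gives $\eta(\{a=1\})=1$. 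The $b$-axis projection restricts the points of $\{a=1\}$ to $(1,0)$ and $(1,\pm1)$, and the two diagonal projections give each of these mass zero, so $\eta(\{a=1\})=0$, a contradiction.

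For the same pair, the tracial measure is $\frac{\sqrt{1-a^2-b^2}}{\pi(a^2+b^2)}\,\di a\,\di b$ on the unit disk. All support functions there are those of the disk, with constant curvature. Positivity comes from inverting the Abel transform of the radial projection $1/|s|-1$, which is decreasing, i.e.\ from the kernel itself.

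So a positivity proof for general noncommuting pairs must use the weight $(1-t)/t$ in an essential way. Your proposal contains no such mechanism, and as it stands it does not prove the theorem.

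A minor point on the approximation step: work with $(a^2+b^2)\,\di\mu$ rather than $|z|\,\di\mu$. Then the moments of a weak-$*$ limit are polynomial moments of $\mu$, which are fixed by traces, and this rules out mass hiding at the origin.
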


The measure does not depend on $f$ and is uniquely determined away from the
origin~\cite[Proposition~3.9]{Heinavaara2023}. $\mu_{A,B}[(0,0)]=0$ fixes it throughout this paper. When $f$ is continous and $H(f)$ twice differentiable, the scalar function is recovered from the trace term by
\begin{equation}\label{eq:inverseH}
    f(z)=2z(H(f))'(z)+z^2(H(f))''(z).
\end{equation}

\subsection{Properties of the measure}
\label{sec:props}
Heinävaara's measure can be shown to have the following properties.

\begin{proposition}[Properties of $\mu_{A,B}$]\label{prop:basic}

    Let $A,B\in M_n(\mathbb{C})$ be Hermitian and $x,y\in\mathbb{R}$. Then following propositions hold:
    \begin{enumerate}[(i)]

        \item For every $\alpha>0$,
        \begin{equation}\label{eq:basicmoments}
            \int_{\mathbb{R}^2} |xa+yb|^\alpha \meas = \frac{ \tr|xA+yB|^\alpha}{\alpha(\alpha+1)}  \, ,
        \end{equation}
        and the same identity holds with $|\cdot|$ replaced by $(\cdot)_+$, or
        $(\cdot)_-$.
        \item  The measure is supported in
        \begin{equation}\label{eq:basicsupp}
            \supp \, \, \mu_{A,B} \subseteq [-\norm{A},\norm{A}] \times [-\norm{B},\norm{B}] \, .
        \end{equation}
        Moreover,  $xA+yB\geq0$ implies $xa + yb \geq 0$ on its support.   
    \end{enumerate}
\end{proposition}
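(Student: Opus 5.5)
For (i), we apply \cref{thm:tracialjoint} to the functions $f(t)=|t|^\alpha$, $f(t)=t_+^\alpha$ and $f(t)=t_-^\alpha$. Each satisfies $f(0)=0$, and each is admissible since $|f(t)/t|\le|t|^{\alpha-1}$ is integrable on $[-M,M]$ for $\alpha>0$. The whole point is then to compute $H(f)$ explicitly. For $z\in\mathbb R$ and $t\in(0,1)$ we have $|zt|^\alpha=t^\alpha|z|^\alpha$, and since $t>0$ also $(zt)_\pm^\alpha=t^\alpha z_\pm^\alpha$. Hence
\begin{equation*}
H(f)(z)=f(z)\int_0^1(1-t)t^{\alpha-1}\,\di t=f(z)\Bigl(\frac1\alpha-\frac1{\alpha+1}\Bigr)=\frac{f(z)}{\alpha(\alpha+1)} .
\end{equation*}
Inserting this into \cref{thm:tracialjoint} gives $\tr[f(xA+yB)]/(\alpha(\alpha+1))=\int f(xa+yb)\,\di\mu_{A,B}$. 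This is \cref{eq:basicmoments} and its $(\cdot)_\pm$ variants, with $\tr|X|^\alpha$ read as $\tr f(X)$ via functional calculus.

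For the positivity statement in (ii), suppose $xA+yB\ge0$. Taking $f(t)=t_-$ (the case $\alpha=1$) gives $\int(xa+yb)_-\,\di\mu_{A,B}=\tfrac12\tr[(xA+yB)_-]=0$. So the nonnegative continuous function $(xa+yb)_-$ vanishes $\mu_{A,B}$-almost everywhere. Because it is continuous, the set where it is positive is open and $\mu_{A,B}$-null, and therefore misses the support. Consequently $xa+yb\ge0$ on $\supp\,\mu_{A,B}$.

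For the support bound \cref{eq:basicsupp}, we use this positivity with specific choices of $(x,y)$. Taking $(x,y)=(1,0)$ and $(-1,0)$ together with the operator inequalities $\norm{A}\id\pm A\ge0$ does not fit directly, because $\id$ is not in the span of $A$ and $B$. So instead I would use the moment identity with large $\alpha$. Set $(x,y)=(1,0)$, and suppose some point $(a_0,b_0)$ of the support had $|a_0|>\norm{A}$. Then some neighbourhood $U$ with $|a|\ge\norm{A}+\varepsilon$ on $U$ has $\mu_{A,B}(U)=c>0$. From (i),
\begin{equation*}
c(\norm{A}+\varepsilon)^\alpha\le\int|a|^\alpha\,\di\mu_{A,B}=\frac{\tr|A|^\alpha}{\alpha(\alpha+1)}\le\frac{n\norm{A}^\alpha}{\alpha(\alpha+1)} .
\end{equation*}
Letting $\alpha\to\infty$ gives a contradiction. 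The same argument with $(x,y)=(0,1)$ bounds $|b|$ by $\norm{B}$.

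The main subtlety I expect is in the measure-theoretic details. First, $\mu_{A,B}$ may be infinite near the origin, which is the reason for the $\mu_{A,B}[(0,0)]=0$ normalization and the $1/t$ integrability condition. The argument above only uses sets $U$ bounded away from the origin, so finiteness of $\mu_{A,B}(U)$ follows from the moment identity itself. Second, we must make sure that the identities hold with possibly infinite integrals before drawing conclusions. Since all integrands involved are nonnegative, this causes no problems.
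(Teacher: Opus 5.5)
Your proof is correct. Part (i) and the positivity claim in (ii) follow the paper's argument. You use the same test functions $|t|^\alpha$ and $(t_\pm)^\alpha$, and the same $\alpha=1$ negative-part identity. Your observation that the open set $\{xa+yb<0\}$ is $\mu_{A,B}$-null, and therefore misses the support, only spells out what the paper leaves implicit.

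You differ from the paper on the box bound \cref{eq:basicsupp}.

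\emph{The paper's route.} It applies \cref{thm:tracialjoint} once more, with the measurable cutoff $f(t)=\mathbf{1}_{\{|t|>M\}}$. This $f$ is admissible because it vanishes near $0$. Its transform $H(f)(z)=\int_0^1\frac{1-t}{t}\mathbf{1}_{\{|z|t>M\}}\,\di t$ vanishes for $|z|\leq M$. Hence for $M\geq\norm{A}$ one gets $\mu_{A,B}\{|a|>M\}=\tr[H(f)(A)]=0$ in one step, and likewise for $B$.

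\emph{Your route.} You stay within the moment identities of (i) and take an $L^\alpha\to L^\infty$ limit: $c(\norm{A}+\varepsilon)^\alpha\leq n\norm{A}^\alpha/(\alpha(\alpha+1))$ forces $c=0$ as $\alpha\to\infty$.

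\emph{Comparison.} Your argument uses only what was already proved and needs no second, discontinuous test function. The price is a limiting and contradiction argument built around a support point. The paper's argument gives the exact null-set statement directly. It also shows that \cref{thm:tracialjoint} can be fed indicator-type $f$, which is convenient whenever one wants sharp localization of $\mu_{A,B}$.
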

\begin{proof}
    \begin{enumerate}[(i)]
        \item Choosing $f(t)=|t|^\alpha$, we get
            \begin{equation}
                H(f)(z)=|z|^\alpha\int_0^1(1-t)t^{\alpha-1}\di t
                =\frac{|z|^\alpha}{\alpha(\alpha+1)} \,,
            \end{equation}
            which by~\cref{thm:tracialjoint} proves the statement. Taking $f(t)=(t_\pm)^\alpha$ gives the other cases.
        \item If $xA + yB \geq0$, the identity in (i) gives
        \begin{equation}
            2 \int(xa+yb)_- \meas = \tr[xA+yB]_- = 0 \, ,
        \end{equation}
        which shows the support restriction. For the more general bound, we apply~\cref{thm:tracialjoint} with $f(t) = \mathbf{1}_{\{|t|>M\}}$ first to $A$, then to $B$, and let $M$ go to the respective norms. This implies in a similar fashion the bounds on the support of $\mu_{A,B}$. \qedhere
    \end{enumerate}
\end{proof}

\begin{remark}\label{rem:bound}
    For $A,B\geq0$, every comparison $mB\leq A \leq MB$ transfers to
    \begin{equation}
        mb \leq a \leq Mb \qquad \mu_{A,B}\text{ - almost everywhere.}
    \end{equation}
    In particular, if $A,B>0$, then $a,b>0$ almost everywhere and $a/b$ is bounded above and away from zero.
\end{remark}

In the classical setting, the measure has a particularly simple form.

\begin{example}[Commuting case]\label{lem:commuting}
    Let $A,B\in M_n(\mathbb{C})$ be commuting Hermitian matrices with
    $A=\sum_{i=1}^n\lambda_i\proj{i}$ and $B=\sum_{i=1}^n\nu_i\proj{i}$ . Then
    \begin{equation}\label{eq:commutingmeas}
        \mu_{A,B} = \sum_{\substack{i=1\\(\lambda_i,\nu_i)\ne(0,0)}}^n \int_0^1 \frac{1-t}{t}\, \delta_{(t\lambda_i,\, t\nu_i)} \, \di t \, .
    \end{equation}
    In particular $\mu_{A,B}$ is supported on the $n$ line segments
    $\{(t\lambda_i,t\nu_i) : t\in(0,1]\}$ and is singular with respect to Lebesgue measure on
    $\mathbb R^2$.
\end{example}
\begin{proof}
    Since $A$ and $B$ are simultaneously diagonal, so is $xA+yB$, with eigenvalues
    $x\lambda_i+y\nu_i$. Hence, for every admissible $f$,
    \begin{align}
        \tr\bigl[H(f)(xA+yB)\bigr]
        = \sum_{i=1}^n H(f)(x\lambda_i+y\nu_i)
        = \sum_{i=1}^n \int_0^1 \frac{1-t}{t}\, f\bigl(t(x\lambda_i+y\nu_i)\bigr) \di t \, .
    \end{align}
    Because $t(x\lambda_i+y\nu_i) = x(t\lambda_i) + y(t\nu_i)$, the right-hand side is precisely
    $\int_{\mathbb{R}^2} f(xa+yb)$ integrated against the measure on the right-hand side
    of~\cref{eq:commutingmeas}. That measure therefore satisfies the defining property
    of~\cref{thm:tracialjoint}, and since the latter determines the measure uniquely away from the
    origin~\cite[Proposition~3.9]{Heinavaara2023}, the two agree.
\end{proof}

The first two entries of~\cref{tab:tracial-joint-measure-dictionary} follow immediately from~\cref{thm:tracialjoint} and~\cref{prop:basic}.

\section{Relative entropy via tracial joint spectral measure}
\label{sec:relent}
We use the following convention for the relative entropy of positive semidefinite $A$ and $B$
\begin{equation}\label{eq:convention}
    D(A\|B) \coloneqq \tr[A\log A] - \tr[A\log B] \quad \text{if } \supp \, A\subseteq\supp \, B ,
    \qquad D(A\|B) \coloneqq +\infty \quad\text{otherwise,}
\end{equation}
with $0\log0 = 0$.

In order to express the term $\tr[A\log A]$ as an integral, we consider the following special case of~\cref{thm:tracialjoint}.

\begin{corollary}\label{cor:xlogx}
    Let $A,B\geq0$ and let $f(t) = t(2\log t+3)$ for $t>0$ and $f(t)=0$ for $t\leq0$. Then, for all $x,y\geq0$,
    \begin{equation} \label{eq:xlogx}
        \tr[ (xA+yB)\log(xA+yB)] = \int_{\mathbb{R}^2} (xa+yb)\bigl(2\log(xa+yb)+3\bigr) \meas \, .
    \end{equation}
    In particular this gives
    \begin{equation}
        \tr[A\log A] =  \int_{\mathbb{R}^2} a (2\log a+3) \meas 
    \end{equation}
\end{corollary}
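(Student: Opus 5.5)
We apply \cref{thm:tracialjoint} directly with the given $f$ and compute $H(f)$. First we check admissibility. $f$ is measurable with $f(0)=0$, and $|f(t)/t| = |2\log t+3|$ for $t\in(0,M]$ and $0$ for $t\le 0$. Since $\log t$ is integrable near $0$, $\int_{-M}^M |f(t)/t|\,\di t<\infty$ for every $M>0$.

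Next we compute $H(f)(z)$. For $z\le 0$ we have $zt\le0$ for $t\in(0,1]$, so $H(f)(z)=0$. For $z>0$,
\begin{equation*}
    H(f)(z)=\int_0^1\frac{1-t}{t}\,zt\bigl(2\log z+2\log t+3\bigr)\di t
    = z(2\log z+3)\int_0^1(1-t)\,\di t + 2z\int_0^1(1-t)\log t\,\di t .
\end{equation*}
Using $\int_0^1(1-t)\,\di t=\tfrac12$ and $\int_0^1(1-t)\log t\,\di t = -1+\tfrac14=-\tfrac34$, this gives
\begin{equation*}
    H(f)(z)=z\log z+\tfrac32 z-\tfrac32 z = z\log z .
\end{equation*}
As a check, \cref{eq:inverseH} gives $2z(\log z+1)+z^2\cdot\tfrac1z = 2z\log z+3z=f(z)$. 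So $H(f)(z)=z\log z$ for $z>0$ and $H(f)(z)=0$ for $z\le 0$, which agrees with the convention $0\log 0=0$.

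Now take $x,y\ge0$. Then $xA+yB\ge0$, so its spectrum lies in $[0,\infty)$, and $\tr[H(f)(xA+yB)]=\tr[(xA+yB)\log(xA+yB)]$ with $0\log0=0$. By \cref{thm:tracialjoint}, this equals $\int f(xa+yb)\,\di\mu_{A,B}$. To replace $f(xa+yb)$ by the formula $(xa+yb)(2\log(xa+yb)+3)$ on the whole support, we use \cref{prop:basic}(ii): since $xA+yB\ge0$, we have $xa+yb\ge0$ on $\supp\,\mu_{A,B}$. Where $xa+yb=0$, both $f$ and the displayed integrand vanish under the convention $0\log0=0$. Hence \cref{eq:xlogx} holds. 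Taking $x=1$, $y=0$ gives the formula for $\tr[A\log A]$.

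The main obstacle is the elementary integral $\int_0^1(1-t)\log t\,\di t$ and the bookkeeping of constants; this is exactly why the $+3$ appears in $f$. The only conceptual point is the sign/support argument, which ensures the integrand makes sense $\mu_{A,B}$-almost everywhere. Integrability of the right-hand side also needs no separate argument: \cref{thm:tracialjoint} asserts the identity, and by \cref{prop:basic}(ii) the support is bounded.
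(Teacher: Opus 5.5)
Your proof is correct and matches the paper's argument: check that $f$ is admissible for \cref{thm:tracialjoint}, compute $H(f)(z)=z\log z$ for $z>0$ (and $0$ otherwise) via $\int_0^1(1-t)\log t\,\di t=-3/4$, then use $xA+yB\geq0$ with \cref{prop:basic}(ii) so that the integrand makes sense $\mu_{A,B}$-almost everywhere. One small caveat on your closing remark: bounded support alone does not give integrability, since $\mu_{A,B}$ typically has infinite mass near the origin (compare the density $\frac{1-t}{t}$ in \cref{lem:commuting}); integrability instead follows from \cref{thm:tracialjoint} itself, or directly from $|f(z)|\le C|z|^{1/2}$ on bounded sets together with \cref{prop:basic}(i) at $\alpha=1/2$.
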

\begin{proof}
    The function $f$ is permitted by~\cref{thm:tracialjoint} and
    \begin{equation}
        H(f)(z)
        =z\int_0^1(1-t)(2\log z+2\log t+3)\di t
        =z\log z\qquad(z>0),
    \end{equation}
    since $\int_0^1(1-t)\log t\di t=-3/4$. Both sides vanish for $z\leq0$. 
\end{proof}
The following Lemma takes care of $\tr[A\log B]$.
\begin{lemma}\label{lem:alogb}
    Let $A,B\geq0$. If $\supp \, A\subseteq \supp \, B$, then
    \begin{equation}
    \tr[ A\log B] = \int_{\mathbb{R}^2} a(2\log b+3)\, \di\mu_{A,B}(a,b) \, .
    \end{equation}
\end{lemma}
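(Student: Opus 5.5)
Since the trace side of \cref{thm:tracialjoint} only produces traces of functions of $xA+yB$, we extract $\tr[A\log B]$ from \cref{cor:xlogx} by differentiating at $x=0$. Set $F(x)\coloneqq\tr[(xA+B)\log(xA+B)]$ for $x\ge0$. By \cref{cor:xlogx} with $y=1$,
\begin{equation}
    F(x)=\int_{\mathbb R^2}(xa+b)\bigl(2\log(xa+b)+3\bigr)\meas .
\end{equation}
On the matrix side, the first-order perturbation formula $\frac{\di}{\di x}\tr[g(B+xA)]\big|_{x=0}=\tr[A\,g'(B)]$ with $g(z)=z\log z$ gives $F'(0^+)=\tr[A(\log B+\id)]=\tr[A\log B]+\tr A$. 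Since $\tr A=2\int a\meas$ (first row of \cref{tab:tracial-joint-measure-dictionary}), it suffices to show that the right-hand side has derivative $\int a(2\log b+3)\meas+2\int a\meas$ at $0^+$. This is exactly what differentiating under the integral predicts, because $\partial_x[(xa+b)(2\log(xa+b)+3)]=a(2\log(xa+b)+5)$.

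\textbf{Reduction to $B>0$.} First assume $B>0$. Then $mB\le A\le MB$ with $m=0$, so by \cref{rem:bound} we have $0\le a\le Mb$ almost everywhere. On the compact support from \cref{prop:basic}(ii), $b$ is strictly positive wherever $a\neq0$. I will also need $b>0$ almost everywhere, which follows from $B>0$ via the last claim of \cref{prop:basic}(ii) applied to $B-\varepsilon\id$. This is slightly awkward, since $\id$ is not in the span of $A,B$; instead I would use $B\ge\varepsilon' A$ together with the fact that $\mu$ has no mass at the origin. On $[0,1]$, $\log(xa+b)$ is then bounded by $\log b$ from below and by a constant from above. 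So the difference quotients are dominated by $a(2|\log b|+C)$, which is integrable because $a\le Mb$ and $b|\log b|$ is bounded. Dominated convergence then gives the derivative, and the two sides match after cancelling $\tr A$.

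\textbf{General case.} For $\supp A\subseteq\supp B$, I restrict both matrices to $\supp B$. Everything there is compatible: $\tr[A\log B]$ uses the convention on $\ker B$ and $A$ vanishes there. The measure $\mu_{A,B}$ equals the measure of the compressions, by uniqueness (\cite[Proposition~3.9]{Heinavaara2023}), since the traces of $H(f)(xA+yB)$ agree using $H(f)(0)=0$.

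The main obstacle is the domination step near the set where $b$ is small, i.e.\ justifying integrability of $a\log b$. Here the bound $a\le Mb$ from \cref{rem:bound} is the key input, and I would verify it carefully before anything else. A secondary check is that the one-sided derivative of the matrix function at $x=0$ is legitimate; it is, since $B>0$ makes $g$ smooth near $\spec B$.
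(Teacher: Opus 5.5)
Your proposal is essentially the paper's proof: differentiate \cref{cor:xlogx} at $(x,y)=(0,1)$ in the direction $A$, obtain $\tr[A\log B]+\tr A$ on the trace side and $\int a(2\log b+5)\,\di\mu_{A,B}$ on the scalar side, cancel $\tr A=2\int a\,\di\mu_{A,B}$, justify the exchange by dominated convergence, and reduce to $B>0$ by restricting to $\supp\, B$. One justification needs repair: $\mu_{A,B}$ has infinite mass near the origin (see the density $(1-t)/t$ in \cref{lem:commuting}), so boundedness of $b|\log b|$ alone does not make $a|\log b|$ integrable; instead use $a|\log b|\leq Mb|\log b|\leq C\sqrt{b}$ on the bounded support, together with $\int\sqrt{b}\,\di\mu_{A,B}=\tfrac43\tr B^{1/2}<\infty$ from \cref{prop:basic}(i).
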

\begin{proof}
    If $\supp \, A\subseteq \supp \, B$,  we can restrict both operators to $\supp \, B$ while leaving both the trace and the measure unchanged. This is due to the uniqueness of $\mu_{A,B}$ away from the origin (see \cref{thm:tracialjoint}).
    We may therefore assume $B>0$.

    Choose $f(t)=t(2\log t+3)$ as in~\cref{cor:xlogx}. Differentiating this identity at $(x,y)=(0,1)$
    gives on the trace side $\tr[A\log B]+\tr A$, and on the scalar side
    $\int a(2\log b+5)\meas$. Here, integral and derivative can be exchanged by dominated convergence.
\end{proof}

\begin{theorem}[Relative entropy via tracial joint spectral measure]\label{thm:relativ_int}
    Let $A,B\geq0$. If $\supp \, A\subseteq\supp \, B$, then
    \begin{equation}\label{eq:relativ_int}
        D(A\|B) = 2\int_{\mathbb{R}^2} a(\log a-\log b)\, \di \mu_{A,B}(a,b) \, .
    \end{equation}
\end{theorem}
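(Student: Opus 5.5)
The theorem should follow by subtracting the two representations already established. By the convention in \cref{eq:convention}, when $\supp\,A\subseteq\supp\,B$ we have $D(A\|B)=\tr[A\log A]-\tr[A\log B]$. \Cref{cor:xlogx} gives
\begin{equation*}
\tr[A\log A]=\int a(2\log a+3)\meas ,
\end{equation*}
and \cref{lem:alogb} gives
\begin{equation*}
\tr[A\log B]=\int a(2\log b+3)\meas .
\end{equation*}
Subtracting them, the constant terms $3a$ cancel and $2a(\log a-\log b)$ remains, which is \cref{eq:relativ_int}.

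The only real work is justifying the subtraction of the two integrals, which requires each to be finite. First I reduce to $B>0$ by restricting to $\supp\,B$, exactly as in the proof of \cref{lem:alogb}; this uses uniqueness of $\mu_{A,B}$ away from the origin. Then \cref{prop:basic}(ii) gives $a,b\ge0$ almost everywhere, since $A\ge0$ and $B\ge0$ (take $x=1,y=0$ and $x=0,y=1$). It also gives that the support of $\mu_{A,B}$ is bounded.

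Finiteness of $\int |a|\,|\log a|\meas$ is handled as follows. The function $a\log a$ is bounded on $[0,\norm{A}]$, and the measure may have infinite mass near the origin because of the $\tfrac{1-t}{t}$ weight. I would therefore bound $|a\log a|\le C_\varepsilon\, a^{1-\varepsilon}+C a^{2}$ and invoke \cref{prop:basic}(i), which says $\int a^\alpha\meas<\infty$ for every $\alpha>0$.

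For $\int a|\log b|\meas$ with $B>0$, I would use \cref{rem:bound}, applied with $A\le \norm{A}\,\lambda_{\min}(B)^{-1}B$, to get $a\le Mb$ almost everywhere. This gives $a|\log b|\le a|\log a|+a|\log M|$ wherever $b<1$ and $a>0$, while $a\log b$ is bounded by $a\log\norm{B}$ where $b\ge1$. So this integral is finite by the previous step. The points with $a=0$ contribute zero, using the convention $0\log 0=0$, which on the scalar side reads $a\log b=0$ when $a=0$.

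I expect this integrability bookkeeping near the origin to be the main obstacle. In particular, interpreting $a\log(a/b)$ on the set where $a=0$ or $b=0$ needs care; once $a\le Mb$ is established almost everywhere, the integrand is well defined and the subtraction is legitimate.
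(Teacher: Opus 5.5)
Your proposal is correct and matches the paper's argument: subtract the identity of \cref{lem:alogb} from that of \cref{cor:xlogx}, and the $3a$ terms cancel. The paper does this subtraction without further comment, so your integrability check (moments from \cref{prop:basic}(i) for $a|\log a|$, and $a\le Mb$ from \cref{rem:bound} for $a|\log b|$) is valid extra rigor on the same route.
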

\begin{proof}
    By~\cref{cor:xlogx} we have $\tr[A\log A] = \int_{\mathbb{R}^2} a(2\log a+3) \meas$ and by~\cref{lem:alogb}, \\$\tr[A\log B] = \int a(2\log b+3)\meas$.
    Subtracting them, the additive constants cancel and
    \begin{align}
        D(A\|B)
        &= \tr[A\log A]-\tr[A\log B]  \\
        &= \int_{\mathbb{R}^2} \big(a(2\log a+3)-a(2\log b+3)\big)\, \di \mu_{A,B}(a,b) \\
        &= 2 \int_{\mathbb{R}^2} a(\log a- \log b)\, \di \mu_{A,B}(a,b)  \, .
    \end{align}
\end{proof}

\begin{example}\label{rem:commuting_relative}
If $A=\sum_i\lambda_i\proj{i}$ and $B=\sum_i\nu_i\proj{i}$ commute, then substituting
\cref{eq:commutingmeas} into~\cref{eq:relativ_int} gives
\begin{equation}
    D(A\|B)
    =2\sum_i\int_0^1(1-t)\lambda_i\log\frac{t\lambda_i}{t\nu_i}\di t
    =\sum_i\lambda_i\log\frac{\lambda_i}{\nu_i}.
\end{equation}
Thus, the representation reduces to the classical formula. 
\end{example}

Because $\mu_{A,B}$ is a positive measure, the representation can also be seen as an entirely classical relative entropy.

\begin{corollary}[Classical relative entropy]\label{cor:classical}
    For $A,B \geq0$ with  $\supp \, A\subseteq\supp \, B$, define the positive measures
    \begin{equation}
        \di P(a,b) = 2 a \meas , \qquad \di Q(a,b) = 2 b \meas ,
    \end{equation}
    on $\mathbb{R}^2$. Then $P(\mathbb{R}^2)= \tr A$, $Q(\mathbb{R}^2)= \tr B$, and
\begin{equation}
    D(A\|B) = \int \log \frac{\di P}{\di Q} \, \di P = D(P\|Q) ,
\end{equation}
with the convention $D(P\|Q) = + \infty$ if $P \nll Q$. If $A$ and $B$ are
density matrices then $P$ and $Q$ are probability measures.
\end{corollary}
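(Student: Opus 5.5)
The plan is to derive the corollary directly from \cref{thm:relativ_int} together with the moment identity in \cref{prop:basic}. There are three steps: check that $P$ and $Q$ are positive measures with the claimed total masses, identify the Radon--Nikodym derivative $\di P/\di Q$, and then rewrite \cref{eq:relativ_int} as a classical relative entropy.

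\textbf{Positivity and total masses.} By \cref{prop:basic}(ii), since $A\geq0$ and $B\geq0$, we have $a\geq0$ and $b\geq0$ on $\supp\,\mu_{A,B}$. Hence $2a\,\di\mu_{A,B}$ and $2b\,\di\mu_{A,B}$ are positive measures. They are finite because the support is compact and $\int|a|\,\di\mu_{A,B}<\infty$. Applying \cref{prop:basic}(i) with $\alpha=1$ and $(x,y)=(1,0)$ gives
\begin{equation}
P(\mathbb R^2)=2\int a\,\di\mu_{A,B}=2\cdot\tfrac12\tr A=\tr A .
\end{equation}
The choice $(x,y)=(0,1)$ gives $Q(\mathbb R^2)=\tr B$ in the same way. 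If $A$ and $B$ are density matrices, both masses equal $1$.

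\textbf{Absolute continuity and the density.} Here I use the support assumption. As in the proof of \cref{lem:alogb}, restricting to $\supp\,B$ leaves the measure unchanged, so we may assume $B>0$. Then $\epsilon A\leq B$ for some $\epsilon>0$. \Cref{rem:bound} turns this into $\epsilon a\leq b$ $\mu_{A,B}$-almost everywhere, so $\{b=0\}\subseteq\{a=0\}$ up to a $\mu_{A,B}$-null set.

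I then claim $P\ll Q$. Suppose $Q(E)=0$. Then $b=0$ $\mu_{A,B}$-a.e.\ on $E$, hence $a=0$ a.e.\ on $E$, and so $P(E)=0$. On $\{b>0\}$ the density is $\di P/\di Q=a/b$. Its value on $\{b=0\}$ is irrelevant, because that set is $P$-null.

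\textbf{Rewriting the relative entropy.} With this density,
\begin{equation}
\int\log\frac{\di P}{\di Q}\,\di P=\int 2a\log\frac ab\,\di\mu_{A,B},
\end{equation}
using $0\log0=0$ on $\{a=0\}$. By \cref{thm:relativ_int} the right-hand side equals $D(A\|B)$.

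The integral is well defined. After the reduction $a/b\in[0,1/\epsilon]$, so the positive part of $a\log(a/b)$ is bounded by $a\log(1/\epsilon)$, which is integrable. I would also note that the general definition of $D(P\|Q)$ for non-normalized finite measures, $\int\log\frac{\di P}{\di Q}\,\di P$, is exactly the expression above. No additional $\tr B-\tr A$ term should be inserted, since the paper's convention has none.

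\textbf{Main obstacle.} The only delicate point is the measure-theoretic justification that the density is $a/b$ and that $\{b=0\}$ carries no $P$-mass. I would handle it entirely through \cref{rem:bound} after the support reduction. The remark about $P\nll Q$ giving $+\infty$ is only a convention; under the stated hypothesis it does not occur.
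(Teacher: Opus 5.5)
Your proposal is correct and follows essentially the same route as the paper: it gets the total masses from \cref{prop:basic}(i), identifies the density $\di P/\di Q=a/b$, and then invokes \cref{thm:relativ_int}. The one addition is your explicit check via \cref{rem:bound} that $\{b=0\}$ carries no $P$-mass, so $P\ll Q$, which the paper's two-line proof leaves implicit; this check even works without reducing to $B>0$, since $\supp\,A\subseteq\supp\,B$ already gives $A\leq MB$ for some finite $M$.
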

\begin{proof}
    By~\cref{prop:basic} (i), we have $\int \di P = \tr[A]$ and $\int \di Q = \tr[B]$. On the set where $a,b>0$ we have
    $\frac{\di P}{\di Q}(a,b) = \frac ab$, so $\int\log\frac{\di P}{\di Q}\di P = 2\int a\log\frac ab\meas$,
    which is $D(A\|B)$ by~\cref{thm:relativ_int}. 
\end{proof}

\subsection{Further integral expressions}
\label{sec:diffdict}
In the following, we generalize the trick used in the previous sections.
Expressions such
as $\tr[(uA+vB\,\id_{\{xA+yB<0\}}]$ or  $\tr[(uA+vB)\,\D\log(xA+yB)[pA+qB]]$ require an additional step to translate them into scalar integrals. For smooth
functions we differentiate~\cref{thm:tracialjoint} in the scalar parameters. For cutoffs, we instead compare the one-sided derivatives. This gives the last entries of~\cref{tab:tracial-joint-measure-dictionary}

Here $\D\log(X)[Y]=\frac{\di}{\di\tau}\log(X+\tau Y)\bigr|_{\tau=0}$ is the Fréchet derivative of the
matrix logarithm.

\begin{theorem}[Tracial joint measure expressions]\label{thm:diffdict}
    Let $A,B\geq 0$, then the following expressions hold:
    \begin{enumerate}[(i)]
        \item Set
        $S=xA+yB$, $V=uA+vB$, and $W=pA+qB$, and assume $S>0$. Then
        \begin{equation}\label{eq:diffdict_log}
            \tr\bigl[V\,\D\log(S)[W]\bigr]
            =2\int_{\mathbb R^2}\frac{(ua+vb)(pa+qb)}{xa+yb}\meas \, .
        \end{equation}
        \item For every $s \in \mathbb{R}$, we have
        \begin{equation}\label{eq:diffdict_ind}
            \tr[(xA+yB)\id_{\{A>sB\}}]
            =2\int_{\mathbb R^2}(xa+yb)\mathbf{1}_{\{a>sb\}}\meas,
        \end{equation}
        and the same identity holds with both inequalities replaced by $\geq$ or reversed.

        \item  For $s>0$ and
        $K_s[X]=\int_0^1(1-r)\ee^{-rsX}\di r$, we have
        \begin{equation}\label{eq:diffdict_exp}
            \int_{\mathbb R^2}a\ee^{-sb}\meas=\tr[A K_s[B]],
            \qquad
            \int_{\mathbb R^2}a\ee^{-sa}\meas=\tr[A K_s[A]].
        \end{equation}
    \end{enumerate}
\end{theorem}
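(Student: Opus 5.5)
The plan is to differentiate the identities of \cref{thm:tracialjoint}, as was done in \cref{lem:alogb}. For each item we pick an admissible $f$, write down the trace identity for general $(x,y)$, and then take derivatives in the scalar parameters. On the trace side we use the standard first-order formula $\frac{\di}{\di\tau}\tr[g(X+\tau Y)]=\tr[g'(X)Y]$. Derivatives are exchanged with the integral by dominated convergence, using the compact support from \cref{prop:basic}(ii) and the pointwise bounds of \cref{rem:bound}.

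\textbf{(i).} Take $f(t)=t(2\log t+3)$ from \cref{cor:xlogx}. Then
\[
\Phi(x,y)=\tr[S\log S]=\int (xa+yb)\bigl(2\log(xa+yb)+3\bigr)\meas .
\]
First replace $(x,y)$ by $(x+\tau p,\,y+\tau q)$ and differentiate at $\tau=0$. This gives
\[
\tr[W\log S]+\tr W=\int (pa+qb)\bigl(2\log(xa+yb)+5\bigr)\meas .
\]
The identity holds on the open set of $(x,y)$ with $S>0$. Differentiating once more in the direction $(u,v)$ gives
\[
\tr\bigl[W\,\D\log(S)[V]\bigr]=2\int\frac{(pa+qb)(ua+vb)}{xa+yb}\meas .
\]
Since $\tr[W\,\D\log(S)[V]]=\tr[V\,\D\log(S)[W]]$ (the Fréchet derivative is self-adjoint for the trace pairing), this is \cref{eq:diffdict_log}.

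For dominated convergence we need $xa+yb$ bounded away from $0$ on the support, for all nearby $(x,y)$. If $S\ge\varepsilon\id$, then $S-\varepsilon(A+B)/c\ge0$ for large $c$. \cref{prop:basic}(ii) then gives $xa+yb\ge(\varepsilon/c)(a+b)$ a.e. That only controls the ratio relative to $a+b$. This is enough, because the integrand $(ua+vb)(pa+qb)/(xa+yb)$ is then dominated by a constant times $(a+b)$, which is integrable by \cref{prop:basic}(i) with $\alpha=1$. The same bound handles the first derivative, since $t\log t$ is locally fine.

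\textbf{(ii).} Use $g(s)=\tr[A-sB]_+=2\int(a-sb)_+\meas$ from \cref{prop:basic}(i). This is a convex function of $s$. Its one-sided derivatives are:
\begin{itemize}
\item on the trace side, $-\tr[B\id_{\{A>sB\}}]$ and $-\tr[B\id_{\{A\ge sB\}}]$ (the right and left derivatives of a sum of eigenvalue functions $(\lambda_i(s))_+$, via analytic perturbation of $A-sB$);
\item on the scalar side, $-2\int b\,\mathbf 1_{\{a>sb\}}\meas$ and $-2\int b\,\mathbf 1_{\{a\ge sb\}}\meas$, by monotone convergence of the difference quotients.
\end{itemize}
Matching right and left derivatives gives the $B$-statement for both $>$ and $\ge$.

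For the $A$-term, use $\tr[(A-sB)_+]=\tr[(A-sB)\id_{\{A>sB\}}]$ together with the scalar analogue. Subtracting gives the $A$-statement, and linearity then gives general $(x,y)$. The reversed inequalities follow by complementation, using $\tr[xA+yB]=2\int(xa+yb)\meas$.

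The main obstacle is the trace-side one-sided derivative at crossing points. The eigenvalues of $A-sB$ are analytic in $s$ but may cross $0$ with multiplicity. I would handle this by choosing analytic eigenvalue branches $\lambda_i(s)$ (Rellich), for which $\lambda_i'(s)=-\langle\psi_i,B\psi_i\rangle$. The right derivative of $(\lambda_i)_+$ at a zero is then determined by the sign of $\lambda_i$ just to the right. The subtlety is identifying $\sum$ over $\{\lambda_i=0,\lambda_i'<0\text{ or }\dots\}$ with $\tr[B\id_{\{A>sB\}}]$. Here one uses that $A\ge0$ implies $A-sB\ge0$ on $\ker B$. If this gets messy, a cleaner route is to apply \cref{thm:tracialjoint} directly with the indicator-type admissible $f(t)=t\mathbf 1_{\{t>0\}}$ and compare the trace and scalar derivatives in the $x$-direction rather than $s$.

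\textbf{(iii).} Take $f(t)=t\ee^{-st}$, which satisfies $f(0)=0$ and $f(t)/t$ is locally integrable. Then
\[
H(f)(z)=z\int_0^1(1-r)\ee^{-srz}\di r=zK_s[z].
\]
Apply \cref{thm:tracialjoint} with $(x,y)$ near $(0,1)$ and differentiate in $x$ at $x=0$. The derivative of $\tr[h(B+xA)]$ with $h(z)=zK_s(z)$ is $\tr[h'(B)A]$. The corresponding scalar side is $\int a\,f'(b)\meas$, with $f'(b)=\ee^{-sb}(1-sb)$.

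This does not yet isolate $\int a\ee^{-sb}\meas$, so instead I would choose $f(t)=\ee^{-st}-1$ directly. Then $H(f)'$ has the right structure, and the $z\,K$-terms are matched by differentiating in $s$, or by using \cref{eq:inverseH}. The second identity is the case $B=A$, i.e. $\mu_{A,A}$, or $y=0$ on the diagonal.
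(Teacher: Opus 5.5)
Your proposal is correct and is essentially the paper's proof: (i) is the same mixed second derivative of $\tr[S\log S]$, (ii) is the same matching of one-sided derivatives of $s\mapsto\tr[A-sB]_+$ followed by the decomposition $xA+yB=x(A-sB)+(xs+y)B$, and in (iii) your $f(t)=\ee^{-st}-1$ is the paper's $f_s$ up to the factor $-s$, so $H(f)'=-sK_s$ directly and no $zK$-terms need matching. The only variations are these: you justify the trace-side one-sided derivatives with Rellich branches instead of the paper's maximization over tests $0\leq T\leq\id$ (the hypothesis doing the work there is $B\geq0$, which makes every branch nonincreasing, not $A\geq0$), and for the second identity in (iii) you should apply \cref{thm:tracialjoint} to $(1+t)A+0\cdot B$ for the pair $(A,B)$, as the paper does, rather than pass to $\mu_{A,A}$, which would need an extra push-forward and uniqueness argument.
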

\begin{proof}
The theorem is proven in~\cref{sec:appdiff}.
\end{proof}

\section{Integral representations}
\label{sec:intrep}
We now derive several integral representations of $D(A\|B)$. 
First we write $a\log(a/b)$ as a scalar integral, then we integrate the expression against
$\mu_{A,B}$, interchange the integrals and use the dictionary~\cref{tab:tracial-joint-measure-dictionary}. 
Here is a selection of integral identities:
\begin{lemma}[Integral expressions]\label{lem:scalar-integrals}
    Let $A,B\geq0$ with $\supp \, A\subseteq\supp \, B$, then for $\mu_{A,B}$-almost everywhere, we have
    \begin{align}
        a \log \frac{a}{b} &= a-b+\int_0^1 \frac{(a-sb)_-}{s}\di s + \int_1^\infty \frac{(a-sb)_+}{s} \di s \, , \label{eq:rep1}\\
        a \log \frac{a}{b} &= a-b+\int_1^\infty \frac{(a-sb)_+}{s} \di s + \int_1^\infty \frac{(b-sa)_+}{s^2} \di s \, ,\label{eq:rep2}\\
        a \log \frac{a}{b} &= a-b+\int_{-\infty}^\infty \frac{\bigl((1-s)a+sb\bigr)_-}{|s|(s-1)^2} \di s \, , \label{eq:rep3}\\
        a \log \frac{a}{b} &= a-b+\int_0^1 \frac{(1-s)(a-b)^2}{sa+(1-s)b} \, \di s \, , \label{eq:rep4}\\
        a \log \frac{a}{b} &= b \int_0^{\infty} (1+\log s)\, \mathbf{1}_{\{a>sb\}}\, \di s \, , \label{eq:rep5}\\
        a \log \frac{a}{b} &=  a\int_0^\infty \frac{\ee^{-sb}-\ee^{-sa}}{s} \,\di s\, . \label{eq:rep6}
    \end{align}
\end{lemma}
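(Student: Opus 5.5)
The statement is purely scalar once we know where $\mu_{A,B}$ lives. My plan is to first reduce to the case $a,b>0$, and then verify each identity by elementary calculus for fixed positive reals $a,b$.

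\textbf{Reduction.} As in the proof of \cref{lem:alogb}, I would restrict both operators to $\supp\,B$, which leaves the measure unchanged away from the origin, so we may assume $B>0$. Then \cref{prop:basic}(ii) gives $a\geq0$ and $b\geq0$ almost everywhere. To get $b>0$, I would use that $B\geq\norm{B^{-1}}^{-1}\id$. For the point $a=0$, every identity degenerates to $0$ once $b>0$, with the convention $0\log0=0$: for instance, in \eqref{eq:rep1} the terms $-b+\int_0^1 sb/s\,\di s$ cancel. So these points can be checked directly, and for the rest we may fix $a,b>0$ and set $r=a/b$. By homogeneity, dividing by $b$, each identity reduces to a one-variable identity in $r>0$.

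\textbf{The six identities.}
\begin{itemize}
\item For \eqref{eq:rep1}, split into the cases $r\geq1$ and $r<1$. If $r\geq1$, only the second integral contributes, giving $\int_1^r (r-s)/s\,\di s=r\log r-r+1$. If $r<1$, only the first contributes, giving $\int_r^1 (s-r)/s\,\di s=1-r+r\log r$. Both agree with $r\log r-r+1$.
\item For \eqref{eq:rep2}, substitute $s\mapsto1/s$ in the $(a-sb)_-$ integral: $(a-b/s)_-/(1/s)\cdot s^{-2}\,\di s$ becomes $(b-sa)_+/s^2\,\di s$ over $[1,\infty)$. This turns \eqref{eq:rep2} into \eqref{eq:rep1}.
\item For \eqref{eq:rep3}, I would use a Möbius change of variables to map the integrand $((1-s)a+sb)_-$ onto the form $(a-tb)_\pm$, with $t=s/(s-1)$. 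One checks how the ranges $s<0$, $s\in(0,1)$ and $s>1$ map to $t\in(0,1)$ and $t>1$, and that the Jacobian reproduces $1/(|s|(s-1)^2)$. The result is \eqref{eq:rep1} again. Alternatively, compute the integral directly in the two cases $r\gtrless1$.
\item For \eqref{eq:rep4}, compute $\int_0^1 (1-s)(r-1)^2/(sr+1-s)\,\di s$ with $u=1+s(r-1)$. This is a standard antiderivative and yields $r\log r-r+1$.
\item For \eqref{eq:rep5}, the indicator restricts $s$ to $(0,r)$. Then $\int_0^r(1+\log s)\,\di s=[s\log s]_0^r=r\log r$, and multiplying by $b$ gives $a\log(a/b)$.
\item For \eqref{eq:rep6}, this is Frullani's integral: $\int_0^\infty (\ee^{-sb}-\ee^{-sa})/s\,\di s=\log(a/b)$ for $a,b>0$.
\end{itemize}

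\textbf{Main obstacle.} The only delicate points are bookkeeping. The first is the change of variables and sign cases in \eqref{eq:rep3}, where the integrand has singularities at $s=0$ and $s=1$. Near $s=1$ the negative part vanishes, since $(1-s)a+sb\to b>0$. Near $s=0$ it also vanishes, since the expression tends to $a>0$, so the integral converges. I would therefore do \eqref{eq:rep3} carefully by reduction to \eqref{eq:rep1}. The second is making sure the measure-zero exceptional set, namely $b=0$ and the origin, is handled by the support reduction above.
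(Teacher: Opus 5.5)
Your proposal is correct and follows essentially the paper's route. You reduce \eqref{eq:rep2} and \eqref{eq:rep3} to \eqref{eq:rep1} by the same substitutions $t=1/s$ and $t=s/(s-1)$, evaluate \eqref{eq:rep5} directly, and treat \eqref{eq:rep6} as Frullani. Your case split $r\gtrless1$ for \eqref{eq:rep1} and the substitution $u=1+s(r-1)$ for \eqref{eq:rep4} are unpacked versions of the paper's $\psi'(r)=\log r$ argument and its Taylor-remainder formula for $t\log t$.

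The reduction step needs tightening. $B\geq\norm{B^{-1}}^{-1}\id$ is not of the form $xA+yB\geq0$, so it cannot be fed into \cref{prop:basic}(ii) directly. Instead, combine it with $A\leq\norm{A}\id$ to get $A\leq MB$, which gives $0\leq a\leq Mb$ almost everywhere (as in \cref{rem:bound}). Then $\mu_{A,B}[(0,0)]=0$ forces $b>0$ almost everywhere.
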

\begin{proof}
    See~\cref{sec:appscalar}.
\end{proof}

Interchanging the integrals is made precise by Fubini-Tonelli. After that, it is just an application of the entries of~\cref{tab:tracial-joint-measure-dictionary}.
\begin{corollary}\label{cor:integral-reps}
    Let $A,B\geq0$ with $\supp \, A\subseteq\supp \, B$, then we have
    \begin{align}
        D(A\|B) &= \tr[A-B]+\int_0^1 \frac{\tr[A-sB]_-}{s} \, \di s + \int_1^\infty \frac{\tr[A-sB]_+}{s} \, \di s \, , \label{eq:int_rep1} \\
        D(A\|B) &= \tr[A-B] + \int_1^\infty \frac{\tr[A-sB]_+}{s}\, \di s + \int_1^\infty \frac{\tr[B-sA]_+}{s^2}\, \di s \, , \label{eq:int_rep2}\\
        D(A\|B) &= \tr[A-B] + \int_\mathbb{R} \frac{\tr[(1-s)A+sB]_-}{|s|(s-1)^2} \, \di s \, , \label{eq:int_rep3} \\
        D(A\|B) &= \tr[A-B] + \int_0^1 (1-s)\, \tr[ (A-B)\, \D\log(sA + (1-s)B)[A-B]] \, \di s \, ,\label{eq:int_rep4}\\
        D(A\|B) &= \int_0^\infty (1 + \log s)\,
        \tr[B\, \id_{\{A>sB\}}] \, \di s \, , \label{eq:int_rep5}\\
        D(A\|B) &= 2\int_0^\infty \frac{\tr[A (K_s[B]-K_s[A])]}{s} \, \di s \, , \label{eq:int_rep6}
    \end{align}
    where
    \begin{equation}
        K_s[X] = \int_0^1 (1-r) \ee^{-rsX} \, \di r \, .
    \end{equation}
\end{corollary}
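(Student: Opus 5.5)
The plan is to start from \cref{thm:relativ_int}, $D(A\|B)=2\int a\log(a/b)\meas$, and insert each scalar identity of \cref{lem:scalar-integrals}. Each of these holds $\mu_{A,B}$-almost everywhere on the region where $a,b\ge0$. This region is guaranteed by \cref{prop:basic}(ii), and by restricting to $\supp B$ as in \cref{lem:alogb} we may take $B>0$. We then multiply by $2$, integrate against $\mu_{A,B}$, and exchange the scalar integral with the $\mu_{A,B}$-integral. The term $a-b$ always contributes $2\int(a-b)\meas=\tr[A-B]$ by the first entry of \cref{tab:tracial-joint-measure-dictionary}.

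For \eqref{eq:int_rep1}--\eqref{eq:int_rep3} all integrands are nonnegative, so Tonelli applies directly. The inner integrals become $2\int(a-sb)_\pm\meas=\tr[A-sB]_\pm$, and similarly $\tr[B-sA]_+$ and $\tr[(1-s)A+sB]_-$, by \cref{prop:basic}(i) with $\alpha=1$.

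For \eqref{eq:int_rep4}, the integrand $(1-s)(a-b)^2/(sa+(1-s)b)$ is again nonnegative. For $s\in[0,1)$ we have $S=sA+(1-s)B>0$, and \cref{thm:diffdict}(i) with $V=W=A-B$ gives $2\int(a-b)^2/(sa+(1-s)b)\meas=\tr[(A-B)\D\log(S)[A-B]]$. At $s=1$ the operator $S$ may be singular, but this is a single null point of the $s$-integral.

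For \eqref{eq:int_rep5}, the integrand $(1+\log s)\mathbf 1_{\{a>sb\}}b$ changes sign at $s=1/\ee$, so we first need absolute integrability. We have $\int_0^\infty|1+\log s|\mathbf 1_{\{a>sb\}}\,\di s\le C(1+a/b)\log(2+a/b)$. When $B>0$, however, $a/b$ need not be bounded unless $A\le MB$; we do have $A\le\norm{A}\norm{B^{-1}}B$, so \cref{rem:bound} gives $a\le Mb$ almost everywhere and $\mu_{A,B}$ has finite mass times $b$. Fubini then applies, and \cref{thm:diffdict}(ii) with $(x,y)=(0,1)$ turns $2\int b\mathbf 1_{\{a>sb\}}\meas$ into $\tr[B\id_{\{A>sB\}}]$.

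For \eqref{eq:int_rep6}, we write $2\int a(\ee^{-sb}-\ee^{-sa})/s\meas=2\tr[A(K_s[B]-K_s[A])]/s$ by \cref{thm:diffdict}(iii). To swap the integrals we split the integrand into its positive and negative parts, on $\{a>b\}$ and $\{a<b\}$. Each part is dominated by $a|\log(a/b)|$ via the Frullani bound $\int_0^\infty|\ee^{-sb}-\ee^{-sa}|/s\,\di s=|\log(a/b)|$. This bound is $\mu_{A,B}$-integrable because $m b\le a\le Mb$ on $\{a>0\}$ holds when $A,B>0$.

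The main obstacle is precisely this integrability bookkeeping when $A$ is singular, since $a$ can then vanish where $b>0$, making $a\log(a/b)$ behave like $0\cdot\infty$. I would handle it with the bound $a|\log(a/b)|\le a|\log a|+a|\log b|$. Both terms are finite integrals by \cref{cor:xlogx} and \cref{lem:alogb}, more precisely by the versions of them with $|\cdot|$, which follow from the support bounds and $f$-integrability. Where necessary, one can fall back on approximating $A$ by $A+\varepsilon\id$ and using monotone convergence on the positive pieces.
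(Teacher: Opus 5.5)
Your proposal is correct and follows essentially the paper's route: substitute the scalar identities of \cref{lem:scalar-integrals} into \cref{thm:relativ_int}, use Tonelli for the nonnegative integrands in \eqref{eq:int_rep1}--\eqref{eq:int_rep4}, justify Fubini for the sign-changing integrands in \eqref{eq:int_rep5} and \eqref{eq:int_rep6}, and read off the trace expressions from \cref{tab:tracial-joint-measure-dictionary}. Your integrability bookkeeping is a more explicit version of the paper's appeal to finiteness of $D(A\|B)$ together with the bounds $2b+a|\log(a/b)|$ and the Frullani identity; it uses $a\leq\norm{A}\norm{B^{-1}}\,b$ after restricting to $\supp\, B$, and $a|\log(a/b)|\leq a|\log a|+a|\log b|$ when $A$ is singular.
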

\begin{proof}
    By~\cref{thm:relativ_int}, we have $D(A\|B)=2\int a \log\frac{a}{b} \, \di \mu$. Then we can use~\cref{lem:scalar-integrals} to rewrite $a\log \frac{a}{b}$. To justify the use of Fubini-Tonelli, we have to show absolute-integrability. Since we assume $\supp \, A\subseteq\supp \, B$, $D(A\|B)$ is finite. All expressions with positive integrand therefore automatically satisfy the requirement. It remains to show the same for integrands that can switch sign. This is also the case since we have on the support of $\mu_{A,B}$
    \begin{equation}
        \int_0^\infty |b+b\log s|\, \mathbf{1}_{\{a>sb\}}\, \di s \leq 2b + a\left|\log\frac{a}{b} \right|
    \end{equation}
    and
    \begin{equation}
        \int_0^\infty \left|a\frac{\ee^{-sb}-\ee^{-sa}}{s} \right|\,\di s = a\left|\log\frac{a}{b} \right|
    \end{equation}
    which together with~\cref{tab:tracial-joint-measure-dictionary} shows absolute-integrability.
\end{proof}

Of the six representations, \cref{eq:int_rep1,eq:int_rep2,eq:int_rep3,eq:int_rep5} can be found in similar or equivalent form in~\cite{Frenkel_2023,Hirche_2024,liu2025layercakerepresentationsquantum}.

\begin{remark}
    If $A=\rho$ and $B=\sigma$ are density matrices, then \cref{eq:int_rep1} can be simplified to
    \begin{equation}
        D(\rho \|\sigma) = \int_0^\infty \frac{\norm{\rho-s\sigma}_1-|1-s|}{2s} \,  \di s \, .
    \end{equation}
    In this form, monotonicity under positive trace-preserving maps is immediate from monotonicity of the trace norm.
\end{remark}

\begin{remark}
    Choosing $A=\rho_{ABC}$ and $B=\exp(\log \rho_{AB} + \log \rho_{BC} - \log \rho_B)$, we have $D(A\|B) = I(A:C|B)$.
    Using Lieb's three matrix Golden Thompson inequality, we know that $\tr B \leq 1$.  Therefore, all these expressions give us stronger versions of strong subadditivity. In the case of $I(A:C|B)=0$, the remainder terms vanish and give structural constraints on the state $\rho$, c.f.~\cite{Hayden_2004}.
\end{remark}

\section{Lower bounds on the relative entropy}
\label{sec:lowerbounds}
The same mechanism from the previous chapter turns scalar inequalities into bounds on the relative entropy. 

\begin{lemma}[Scalar lower bound]\label{lem:lowerbound}
    For $a\geq0,b>0$, we have
    \begin{equation}
        a \log \frac{a}{b} \geq a-b + \frac{3(a-b)^2}{2(a+2b)}
    \end{equation}
\end{lemma}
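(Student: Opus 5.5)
By homogeneity, both sides scale linearly under $(a,b)\mapsto(\lambda a,\lambda b)$ for $\lambda>0$. We may therefore set $b=1$ and write $x=a\ge 0$. The claim becomes
\begin{equation}
    g(x) \coloneqq x\log x - (x-1) - \frac{3(x-1)^2}{2(x+2)} \ \geq\ 0 \qquad (x\geq 0),
\end{equation}
with $0\log 0=0$. At the endpoint $x=0$ we get $g(0)=1-\tfrac{3}{4}=\tfrac14>0$, so it suffices to treat $x>0$ by calculus.

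\textbf{Step 1: first derivative.} We have $g(1)=0$. The derivative of the rational part is
\begin{equation}
    \frac{\di}{\di x}\frac{3(x-1)^2}{2(x+2)} = \frac{3(x-1)(x+5)}{2(x+2)^2},
\end{equation}
so
\begin{equation}
    g'(x)=\log x-\frac{3(x-1)(x+5)}{2(x+2)^2},
\end{equation}
which also vanishes at $x=1$.

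\textbf{Step 2: second derivative.} Differentiating the rational term once more gives $\tfrac{3}{2}\cdot\tfrac{18}{(x+2)^3}=\tfrac{27}{(x+2)^3}$. Hence
\begin{equation}
    g''(x)=\frac1x-\frac{27}{(x+2)^3}=\frac{(x+2)^3-27x}{x(x+2)^3}.
\end{equation}
The numerator factors as $(x+2)^3-27x=(x-1)^2(x+8)$, which is nonnegative for $x>0$. So $g$ is convex on $(0,\infty)$ with a critical point at $x=1$. It follows that $g\ge g(1)=0$ there, and together with $g(0)>0$ this proves the lemma.

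The only place needing care is the factorization in Step 2. The numerator $x^3+6x^2-15x+8$ has a double root at $x=1$, which is exactly why the constant $3/2$ and the denominator $a+2b$ make the bound tight to second order. Dividing by $(x-1)^2$ leaves $x+8$, and this confirms the factorization.
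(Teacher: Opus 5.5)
Your proof is correct, and it takes a different route from the paper's.

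\textbf{Your route.} You use homogeneity to set $b=1$ and verify the one-variable inequality directly by calculus. The key step is the factorization $(x+2)^3-27x=(x-1)^2(x+8)$. It makes $g$ convex on $(0,\infty)$, and together with $g(1)=g'(1)=0$ this gives $g\ge 0$. You check $a=0$ separately.

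\textbf{The paper's route.} The paper starts from the Taylor-remainder representation \cref{eq:rep4},
\[
a\log\frac ab=a-b+(a-b)^2\int_0^1\frac{1-s}{sa+(1-s)b}\,\di s .
\]
It regards $2(1-s)\,\di s$ as a probability measure on $[0,1]$ and applies Jensen's inequality to $x\mapsto 1/x$. Under this weight the mean of $s$ is $1/3$, so the mean of $sa+(1-s)b$ is $(a+2b)/3$. This yields the constant $\tfrac32$ and the denominator $a+2b$ in one line, and $a=0$ needs no separate treatment.

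\textbf{What each buys.} The paper's argument explains where the constants come from. It also fits the paper's general scheme: apply a scalar inequality inside a scalar integral representation, then lift the result through $\mu_{A,B}$. Your argument needs the constants known in advance, but it is fully self-contained and gives more information:
\begin{itemize}
\item[(1)] $g''>0$ for $x\neq 1$, so $g$ is strictly convex and equality holds only at $a=b$.
\item[(2)] $g''$ has a double zero at $x=1$, with $g''(x)\approx (x-1)^2/3$. So $g(x)\approx (x-1)^4/36$, and the two sides agree through third order at $a=b$. This is slightly more than the ``second order'' you state.
\end{itemize}
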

\begin{proof}
    By~\cref{eq:rep4}, we have $a \log \frac{a}{b} = a-b+(a-b)^2\int_0^1 1/(sa+(1-s)b) \, (1-s) \di s$. Treating $2(1-s)$ as a probability distribution on $[0,1]$, we can apply Jensen's inequality to the function $x\to1/x$. Since its mean is $1/3$, this gives $\int_0^1\frac{1-s}{sa+(1-s)b} \di s \geq \frac{3}{2(a+2b)}$.
\end{proof}

From this inequality, Pinsker can be derived by a simple application of Cauchy-Schwarz. In a similar fashion, one can use Hölder to derive a family of lower bounds.

\begin{corollary}[Lower bounds]\label{cor:pinsker}
    Let $A,B\geq0$. Then
    \begin{equation}\label{eq:pinsker}
        D(A\|B) \geq \tr[A-B] + \frac{3}{2}\frac{\norm{A-B}_1^2}{\tr[A+2B]} \, .
    \end{equation}
    For density matrices $\rho,\sigma$ it reduces to Pinsker's inequality 
    \begin{equation}\label{eq:pinskerstates}
        D(\rho\|\sigma) \geq \frac{1}{2}\norm{\rho-\sigma}_1^2 \, .
    \end{equation}
    Using Hölder, one obtains for all $p>1$
    \begin{equation}\label{eq:holderpinskerD}
        D(A\|B)\geq\tr[A-B]+3c_p\, \frac{\bigl(\tr|A-B|^{2/p}\bigr)^p}{\bigl(\tr[(A+2B)^{1/(p-1)}]\bigr)^{p-1}}.
    \end{equation}
    with
    \begin{equation}
        c_p=\left(\frac{p^2}{2(2+p)}\right)^p\frac{p^{p-1}}{(p-1)^{2(p-1)}}.
    \end{equation}
\end{corollary}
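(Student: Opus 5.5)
The plan is to integrate \cref{lem:lowerbound} against $2\mu_{A,B}$ and then apply Cauchy--Schwarz (resp.\ Hölder) to the remainder term. Throughout I use \cref{prop:basic}(i) to turn the resulting moments of $\mu_{A,B}$ back into traces.

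\emph{Reduction.} If $\supp\,A\not\subseteq\supp\,B$, then $D(A\|B)=+\infty$ and there is nothing to prove. Otherwise I restrict both operators to $\supp\,B$, as in the proof of \cref{lem:alogb}; this changes neither the traces nor $\mu_{A,B}$. So I may assume $B>0$. By \cref{rem:bound} and \cref{prop:basic}(ii), $a\ge 0$ and $b>0$ hold $\mu_{A,B}$-a.e., so \cref{lem:lowerbound} applies pointwise. Integrating it against $2\mu_{A,B}$ and using \cref{thm:relativ_int} together with $\tr[A-B]=2\int(a-b)\,\di\mu_{A,B}$ gives
\begin{equation}
D(A\|B)\ \ge\ \tr[A-B]+3\int\frac{(a-b)^2}{a+2b}\,\di\mu_{A,B}(a,b).
\end{equation}
All integrals here are finite because $\mu_{A,B}$ is finite away from the origin on its compact support. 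The remaining task is to bound $I\coloneqq\int\frac{(a-b)^2}{a+2b}\,\di\mu_{A,B}$ from below.

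\emph{Pinsker form.} I apply Cauchy--Schwarz to $|a-b|=\frac{|a-b|}{\sqrt{a+2b}}\sqrt{a+2b}$. This is legitimate since $a+2b\ge0$ a.e. by \cref{prop:basic}(ii), because $A+2B\ge0$. It yields
\begin{equation}
\Bigl(\int|a-b|\,\di\mu_{A,B}\Bigr)^2\le I\int(a+2b)\,\di\mu_{A,B}.
\end{equation}
By \cref{prop:basic}(i) with $\alpha=1$, the left side equals $\norm{A-B}_1^2/4$ and $\int(a+2b)\,\di\mu_{A,B}=\tr[A+2B]/2$. Hence $3I\ge\frac32\norm{A-B}_1^2/\tr[A+2B]$, which is \cref{eq:pinsker}. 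For density matrices, $\tr[A-B]=0$ and $\tr[A+2B]=3$, which gives \cref{eq:pinskerstates}.

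\emph{Hölder form.} I write $|a-b|^{2/p}=\bigl(\tfrac{(a-b)^2}{a+2b}\bigr)^{1/p}(a+2b)^{1/p}$ and apply Hölder with exponents $p$ and $p/(p-1)$:
\begin{equation}
\int|a-b|^{2/p}\,\di\mu_{A,B}\le I^{1/p}\Bigl(\int(a+2b)^{1/(p-1)}\,\di\mu_{A,B}\Bigr)^{(p-1)/p}.
\end{equation}
Now \cref{prop:basic}(i) gives the two moments. With $\alpha=2/p$ one has $\int|a-b|^{2/p}\,\di\mu_{A,B}=\frac{p^2}{2(2+p)}\tr|A-B|^{2/p}$. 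With $\alpha=1/(p-1)$ one has $\alpha(\alpha+1)=p/(p-1)^2$, so $\int(a+2b)^{1/(p-1)}\,\di\mu_{A,B}=\frac{(p-1)^2}{p}\tr[(A+2B)^{1/(p-1)}]$; here I use $(\cdot)_+$ in place of $|\cdot|$, justified by positivity. Raising to the $p$-th power and solving for $I$ gives $I\ge c_p\,(\tr|A-B|^{2/p})^p/(\tr[(A+2B)^{1/(p-1)}])^{p-1}$ with exactly the stated $c_p$, and multiplying by $3$ gives \cref{eq:holderpinskerD}.

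The main obstacle is only bookkeeping. One must make sure that the pointwise lemma applies $\mu_{A,B}$-a.e., which requires $b>0$ and $a+2b\ge0$ on the support; the reduction to $B>0$ handles this. One must also track the constants $\alpha(\alpha+1)$ from \cref{prop:basic}(i) correctly, since they are what produce $c_p$.
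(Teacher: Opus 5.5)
Your proposal is correct and takes essentially the same route as the paper: integrate \cref{lem:lowerbound} against $2\mu_{A,B}$, apply Cauchy--Schwarz (resp.\ Hölder) to $\int (a-b)^2/(a+2b)\,\di\mu_{A,B}$, and convert the moments back to traces with \cref{prop:basic}(i). Your bookkeeping for $c_p$ is right, and your reduction to $B>0$ spells out what the paper leaves implicit; one justification should be fixed, though. Finiteness of the integrals does not follow from $\mu_{A,B}$ being finite away from the origin, since it has infinite mass near $0$. It follows instead from the integrands being bounded by a multiple of $a+b$ (e.g.\ $(a-b)^2/(a+2b)\le|a-b|$), combined with the first-moment identity in \cref{prop:basic}(i).
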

\begin{proof}
    Define $\delta=a-b$ and $m=a+2b$, then Cauchy--Schwarz allows us to bound $\int_{\mathbb{R}^2} \frac{(a-b)^2}{a+2b} \meas$ via
    \begin{equation}
        \int\frac{\delta^2}{m}\meas 
        \geq  \frac{\left(\int|\delta|\meas\right)^2}{\left(\int m\meas\right)}
    \end{equation}
    \cref{tab:tracial-joint-measure-dictionary} then lifts the expression to~\cref{eq:pinsker}. Analogously, applying  Hölder's inequality gives
    \begin{equation}\label{eq:holderrearranged}
        \int\frac{|\delta|^2}{m}\meas
        \geq
        \frac{\bigl(\int|\delta|^{2/p}\meas\bigr)^p}
        {\bigl(\int m^{1/(p-1)}\meas\bigr)^{p-1}}.
    \end{equation}
    Applying~\cref{prop:basic} (i) then leads to the concrete $c_p$.
\end{proof}

\begin{remark}[Log-sum inequality]\label{rem:logsum}
    Applying the usual log-sum inequality to the measure $P,Q$ defined in~\cref{cor:classical} gives
    \begin{equation}
        D(A\|B) \geq \tr A \, \log \frac{\tr A}{\tr B}
    \end{equation}
\end{remark}

\section{Binary testing and further consequences }
\label{sec:testing}

\Cref{cor:classical} shows that the classical pair $(P,Q)$ reproduces
the relative entropy of $(\rho,\sigma)$. We now show that it produces the full binary-testing geometry as well. 
In fact, it connects to the recent analysis done in~\cite{gour2026noncommutativity}. The there discussed layer-cake Stieltjes measure (see also~\cite{liu2025layercakerepresentationsquantum}) arises as the push-forward of the tracial joint measure. 

Throughout this section, $\rho$ and $\sigma$ are states with
$\supp \,\rho\subseteq\supp \,\sigma$. If necessary, we restrict ourselves to $\supp \,\sigma$ such that $\sigma>0$. Therefore we always have $P\ll Q$, and
\begin{equation}\label{eq:likelihoodratio}
    R(a,b)\coloneqq \frac ab=\frac{\di P}{\di Q}(a,b) <\infty
    \qquad Q\text{-almost everywhere}.
\end{equation}
Let $\nu=R_\#Q$ be the law of this likelihood ratio. It is easy to check that it is a probability measure on
$[0,\infty)$ with mean one. In particular, we can formulate the relative entropy in terms of this measure. \cref{thm:relativ_int} implies $D(\rho\|\sigma) = \int_0^\infty r \log r \, \di \nu$. In fact, we can define a whole family of expressions by \begin{equation}
    D_f^{TJ}(\rho\|\sigma) \coloneqq D_f(P\|Q) =  \int_0^\infty f(r) \, \di \nu(r) \, .
\end{equation}

\begin{remark}[Layer cake $f$-divergences]\label{rem:layercakef}
For a finite continuous convex function $f:[0,\infty)\to\mathbb R$ with $f(1)=0$, the tracial-measure defines the following $f$-divergence:
\begin{equation}
    D_f^{\mathrm{TJ}}(\rho\|\sigma) = D_f(P\|Q) = 2\int_{\mathbb{R}^2}b\,f\left(\frac {a}{b}\right)\,\di\mu_{\rho,\sigma}(a,b)\,.
\end{equation}
This definition coincides with the quantum layer cake $f$-divergence introduced in~\cite[Definition~1]{liu2025layercakerepresentationsquantum}. By \cref{thm:diffdict} (ii), we have 
\begin{equation}
    \tr [\sigma\id_{\{\rho>\gamma\sigma\}}]= 2\int_{\mathbb{R}^2}b\,\mathbf{1} _{\{a>\gamma b\}}\, \di\mu_{\rho,\sigma}(a,b) = Q\{R>\gamma\} = \nu[(\gamma,\infty)] \,.
\end{equation}
Using the scalar identity $f(r) = f(0) + \int_0^\infty \mathbf{1}_{\{\gamma<r\}} \, \di f(\gamma)$, we get 
\begin{equation}
    D_f^{TJ}(\rho\|\sigma) = \int_0^\infty f(r) \, \di \nu(r) = f(0)+\int_0^\infty \nu[(\gamma,\infty)] \, \di f(\gamma) = f(0) + \int_0^\infty \tr [\sigma\id_{\{\rho>\gamma\sigma\}}] \, \di f(\gamma) \, .
\end{equation}
In particular, \cite[Definition~2]{liu2025layercakerepresentationsquantum} can be written as
\begin{equation}
    Q_\alpha^{\mathrm{LC}}(\rho\|\sigma) \coloneqq\alpha\int_0^\infty\gamma^{\alpha-1} \tr\bigl[\sigma\id_{\{\rho>\gamma\sigma\}}\bigr]\di\gamma 
    = \int_0^\infty r^\alpha \, \di \nu(r)
    =\int\left(\frac{\di P}{\di Q}\right)^\alpha\di Q \,.
\end{equation}
Thus the layer cake R\'enyi divergence is precisely $D_\alpha(P\|Q)$ for $\alpha\in(0,1)\cup(1,\infty)$. 
\end{remark}

For a quantum pair and a classical pair, define the testing regions
\begin{align}
    \cT_{\rm q}(\rho,\sigma) &\coloneqq\left\{\bigl(\tr[T\rho],\tr[T\sigma]\bigr):0\leq T\leq\id\right\},\label{eq:Tq}\\
    \cT_{\rm c}(P,Q)
    &\coloneqq \left\{\left(\int\varphi\,\di P,\int\varphi\,\di Q\right):0\leq\varphi\leq 1, \text{ measurable}\right\}.
    \label{eq:Tc}
\end{align}
Both are compact convex subsets of $[0,1]^2$.

\begin{proposition}[Classical testing]\label{prop:testingregion}
For states $\rho,\sigma$ with $\supp \,\rho\subseteq\supp \,\sigma$, we have
\begin{equation}\label{eq:testingregion}
    \cT_{\rm q}(\rho,\sigma)=\cT_{\rm c}(P,Q).
\end{equation}
In particular, for every $t\geq0$,
\begin{equation}\label{eq:stoploss}
    \tr(\rho-t\sigma)_+ =\int_0^\infty(r-t)_+\,\di\nu(r) \, .
\end{equation}
\end{proposition}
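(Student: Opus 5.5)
Both testing regions are compact convex subsets of $[0,1]^2$, so the plan is to compare them through their support functions: for $(\alpha,\beta)\in\R^2$,
\begin{equation}
    h_{\rm q}(\alpha,\beta)=\max_{0\leq T\leq \id}\tr[T(\alpha\rho+\beta\sigma)]=\tr[\alpha\rho+\beta\sigma]_+ ,
\end{equation}
with the maximum attained at the spectral projection onto the positive part. On the classical side,
\begin{equation}
    h_{\rm c}(\alpha,\beta)=\sup_{0\leq\varphi\leq1}\int\varphi\,(\alpha\,\di P+\beta\,\di Q)=\int (2\alpha a+2\beta b)_+\,\di\mu_{\rho,\sigma} .
\end{equation}
The second expression holds because $\alpha\,\di P+\beta\,\di Q = 2(\alpha a+\beta b)\,\di\mu_{\rho,\sigma}$ is a signed measure with this density, and the supremum is attained at $\varphi=\mathbf{1}_{\{\alpha a+\beta b>0\}}$. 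This is a signed-measure analogue of the Neyman--Pearson bound. Integrability is not an issue: by \cref{prop:basic} (ii) the support is bounded, and by (i) with $\alpha=1$ we have $\int|a|\,\di\mu<\infty$.

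Next, apply \cref{prop:basic} (i) with exponent $1$ and the $(\cdot)_+$ version. This gives $2\int(\alpha a+\beta b)_+\,\di\mu_{\rho,\sigma}=\tr[\alpha\rho+\beta\sigma]_+$, so $h_{\rm q}=h_{\rm c}$ on all of $\R^2$.

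It remains to note that both sets are closed and convex. The quantum region is the image of the compact set $\{0\leq T\leq\id\}$ under a linear map. For the classical region, use weak-$*$ compactness of $\{0\leq\varphi\leq1\}$ in $L^\infty(\mu)$ and continuity of $\varphi\mapsto(\int\varphi\,\di P,\int\varphi\,\di Q)$, since $P,Q\ll\mu$ with bounded densities on the bounded support. Alternatively, the closure can be taken without changing the support function, and extremality of indicator tests then shows attainment. Two closed convex sets with the same support function coincide, which proves \cref{eq:testingregion}. The closedness of $\cT_{\rm c}$ is the only mildly technical point, and I would handle it by the weak-$*$ argument.

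For \cref{eq:stoploss}, take $(\alpha,\beta)=(1,-t)$ to get
\begin{equation}
    \tr(\rho-t\sigma)_+=\int(a-tb)_+\,2\,\di\mu_{\rho,\sigma}.
\end{equation}
As arranged earlier, restrict to $\supp\,\sigma$ so that $\sigma>0$; then \cref{rem:bound} gives $b>0$ $\mu$-almost everywhere. Hence $(a-tb)_+=b\,(a/b-t)_+$ almost everywhere, and so
\begin{equation}
    \int 2b\,(R-t)_+\,\di\mu_{\rho,\sigma}=\int(R-t)_+\,\di Q=\int_0^\infty(r-t)_+\,\di\nu(r)
\end{equation}
by the definition of the push-forward $\nu=R_\#Q$. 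The restriction to $\supp\,\sigma$ does not change the measure, by the uniqueness argument used in \cref{lem:alogb}.
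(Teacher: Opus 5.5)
Your proposal is correct and follows essentially the same route as the paper. You compare the support functions of the two compact convex regions, identify both with $\tr[\alpha\rho+\beta\sigma]_+$ via \cref{prop:basic}~(i), and then specialize to $(\alpha,\beta)=(1,-t)$ and push forward under $R=a/b$. Your extra justifications, namely closedness of $\cT_{\rm c}$ and $b>0$ $\mu$-a.e.\ (from $0\leq\rho\leq M\sigma$ after restricting to $\supp\,\sigma$, together with $\mu[(0,0)]=0$), fill in steps the paper only asserts. One small correction: the weak-$*$ continuity you need comes from $a,b\in L^1(\mu_{\rho,\sigma})$, which you already noted via \cref{prop:basic}~(i). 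It does not follow from boundedness of the densities, since $\mu_{\rho,\sigma}$ has infinite mass near the origin.
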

\begin{proof}
    We are going to show that the two testing regions give the same optimal value for every weighted sum of the two testing probabilities. For this, consider $\tr[T(x\rho + y\sigma)]$. Maximizing the test $T$ then gives $\tr[x\rho+y\sigma]_+$. Analogously, we find for $\max_\varphi \left( x \int \varphi \, \di P + y \int \varphi \, \di Q\right)=2\int(xa+yb)_+ \, \di \mu$. By~\cref{prop:basic} (i), these two expressions are equal. The two regions therefore have the same maximum for every $(x,y)$. Since both regions are compact and convex, they have to be the same. 
    \cref{eq:stoploss} follows from $\tr[\rho-t\sigma]_+ = 2 \int(a-tb)_+\, \di \mu = \int (\frac{a}{b}-t)_+\, \di Q = \int (r-t)_+ \, \di \nu(r)$.
\end{proof}

Quantum testing regions and their Lorenz curves were studied before in~\cite{buscemi2017lorenz}, and their
description by a layer cake Stieltjes measure was made explicit in~\cite{gour2026noncommutativity}.
Here, we connect them to the tracial joint measure.
In particular, we can make the following observation.
\begin{corollary}[Data processing]\label{cor:data-processing}
Let $\Phi$ be a positive trace-preserving map and denote by $P', Q'$ and $\nu'$ the classical pair and likelihood measure corresponding to $(\Phi(\rho),\Phi(\sigma))$. There exists a Markov kernel $K$ such that
\begin{equation}\label{eq:blackwellkernel}
    KP=P',\qquad KQ=Q' \, .
\end{equation}
Moreover, there exists a coupling $(R,R')$ with laws $\nu,\nu'$ that satisfies $\mathbb{E}[R\mid R']=R'$.
Therefore we have $\nu \succeq_{\rm cx}\nu '$ and every $f$-divergence from~\cref{rem:layercakef} satisfies 
\begin{equation}
    D_f^{TJ}(\rho\|\sigma) \geq D_f^{TJ}(\Phi(\rho)\|\Phi(\sigma)) \,.
\end{equation}
\end{corollary}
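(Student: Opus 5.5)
The plan is to reduce everything to the classical Blackwell--Sherman--Stein theorem for dichotomies, using \cref{prop:testingregion} as the bridge. First I show that the testing region can only shrink under $\Phi$: $\cT_{\rm q}(\Phi(\rho),\Phi(\sigma))\subseteq\cT_{\rm q}(\rho,\sigma)$. For any test $0\leq T\leq\id$, the adjoint $\Phi^*$ is positive and unital, because $\Phi$ is positive and trace preserving. Hence $0\leq\Phi^*(T)\leq\id$, and
\begin{equation}
\bigl(\tr[T\Phi(\rho)],\tr[T\Phi(\sigma)]\bigr)=\bigl(\tr[\Phi^*(T)\rho],\tr[\Phi^*(T)\sigma]\bigr)\in\cT_{\rm q}(\rho,\sigma).
\end{equation}
Note also that $\supp\,\Phi(\rho)\subseteq\supp\,\Phi(\sigma)$, since $\rho\leq c\sigma$ implies $\Phi(\rho)\leq c\Phi(\sigma)$. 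So $P',Q',\nu'$ are well defined. By \cref{prop:testingregion} we then get $\cT_{\rm c}(P',Q')\subseteq\cT_{\rm c}(P,Q)$.

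Next I translate this inclusion into a convex-order statement. Inclusion of testing regions is equivalent to
\begin{equation}
\max_{(\alpha,\beta)\in\cT'}(x\alpha+y\beta)\leq\max_{(\alpha,\beta)\in\cT}(x\alpha+y\beta)\quad\text{for all } x,y.
\end{equation}
Taking $x=1$, $y=-t$ gives $\int(r-t)_+\di\nu'\leq\int(r-t)_+\di\nu$ for all $t\geq0$. Both $\nu$ and $\nu'$ are probability measures on $[0,\infty)$ with mean one, and the domination of all call functions (stop-loss transforms) is the standard characterization of convex order. Hence $\nu\succeq_{\rm cx}\nu'$.

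Strassen's theorem then yields a martingale coupling $(R,R')$ with $R\sim\nu$, $R'\sim\nu'$ and $\mathbb{E}[R\mid R']=R'$. The $f$-divergence inequality follows from conditional Jensen: $\int f\di\nu=\mathbb{E}f(R)\geq\mathbb{E}f(\mathbb{E}[R\mid R'])=\int f\di\nu'$. Here $f$ is convex with $f(1)=0$. Integrability is the one point to watch for unbounded $f$, but $\nu$ has compact support by \cref{rem:bound}, so it is harmless.

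For the Markov kernel, I reduce both experiments to their likelihood ratio, since $R$ is a sufficient statistic. On $[0,\infty)$ one has $\di\tilde P=r\,\di\nu$ and $\di\tilde Q=\di\nu$, and likewise for the primed experiment. I build the kernel in three steps.
\begin{enumerate}[(i)]
\item Go from $(a,b)$ to $r=R(a,b)$, which pushes $(P,Q)$ to $(r\,\di\nu,\di\nu)$.
\item Apply the backward kernel $r\mapsto\mathrm{Law}(R'\mid R=r)$ from the coupling. It sends $\di\nu$ to $\di\nu'$. It sends $r\,\di\nu$ to $\mathbb{E}[R\mid R'=r']\,\di\nu'(r')=r'\,\di\nu'$, using the martingale property.
\item Lift from $r'$ back to $(a,b)$-space via the conditional law of $(a,b)$ given $R'=r'$ under $Q'$.
\end{enumerate}
Step (iii) reproduces both $Q'$ and $P'$, because $\di P'/\di Q'=r'$ is constant on the fibres.

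The main obstacle is the convex-order step, and in particular the atoms and behaviour at $r=0$ on $Q$-null sets. Since $P\ll Q$ and $P'\ll Q'$, everything is governed by $\nu$, and the mean-one normalization removes the boundary term. So I would invoke the stop-loss characterization of convex order for mean-equal measures directly, and then Strassen's theorem for existence.
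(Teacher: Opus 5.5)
Your proof is correct, but it runs the paper's chain of implications in the opposite order.

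\textbf{The paper's route.} It passes from $\cT_{\rm q}(\Phi(\rho),\Phi(\sigma))\subseteq\cT_{\rm q}(\rho,\sigma)$ directly to the kernel $K$ by citing the binary Blackwell theorem. It then builds the coupling explicitly, drawing $(X,Y)$ from $Q(\di x)K(x,\di y)$ and checking $\mathbb{E}[R\,g(Y)]=\mathbb{E}[R'\,g(Y)]$. Only at the end does it derive $\nu\succeq_{\rm cx}\nu'$ by conditional Jensen.

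\textbf{Your route.} You extract from the inclusion only the stop-loss inequality $\int(r-t)_+\,\di\nu'\leq\int(r-t)_+\,\di\nu$. From there you:
\begin{enumerate}[(i)]
\item get convex order first from its standard characterization, with the mean-one normalization handling $t<0$;
\item obtain the martingale coupling from Strassen's theorem;
\item construct the kernel by hand through the sufficient statistic $R$.
\end{enumerate}
Your three-step composition does map $(P,Q)$ to $(P',Q')$. The key point is that the backward kernel sends $r\,\di\nu$ to $\mathbb{E}[R\mid R'=r']\,\di\nu'=r'\,\di\nu'$. In effect you replace Blackwell's theorem by Strassen's theorem plus regular conditional distributions, i.e.\ you reprove the binary Blackwell--Sherman--Stein theorem in this setting.

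\textbf{What each buys.} Your route is transparent about the input: only the pointwise domination of the testing curves $t\mapsto\tr(\rho-t\sigma)_+$ is used. You could therefore bypass the classical testing regions entirely and apply \cref{eq:stoploss} to both pairs directly. You also supply two checks the paper leaves implicit:
\begin{enumerate}[(i)]
\item $\supp\,\Phi(\rho)\subseteq\supp\,\Phi(\sigma)$, so that $P',Q',\nu'$ are defined;
\item integrability of $f(R)$, which follows from the bounded support of $\nu$ and $\nu'$.
\end{enumerate}
The paper's route is shorter and produces the coupling constructively once the kernel is granted.
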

\begin{proof}
The adjoint $\Phi^*$ is positive and unital, so every test $0\leq T\leq\id$ on the output pair induces a test $0\leq \Phi^*(T) \leq 1$ on the input with the same probabilites. Thus, we have $\cT_{\rm q}(\Phi(\rho),\Phi(\sigma))\subseteq\cT_{\rm q}(\rho,\sigma)$. 
By \cref{prop:testingregion}, the same inclusion holds for the two classical testing regions. The binary Blackwell theorem~\cite{blackwell1953} therefore gives a Markov kernel $K$ with $KP=P'$ and $KQ=Q'$.
We can construct the coupling by considering $(X,Y)$ under the joint law $Q(\di x)K(x,\di y)$ and defining $R = \frac{\di P}{\di Q}(X)$ and $R' = \frac{\di P'}{\di Q'}(Y)$. Since $X$ and $Y$ have laws $Q$ and $Q'$, the ratios $R,R'$ have laws $\nu$ and $\nu'$. Then, for every bounded measurable function $g$, we have 
\begin{align}
    \mathbb{E}[Rg(Y)] &= \int \int \frac{\di P}{\di Q}(x) g(y) K(x,\di y) Q(\di x) = \int \int g(y) K(x,\di y) P(\di x) = \int g(y) \di(KP)(y) \\ &= \int g(y) \di P'(\di y) = \int g(y) \frac{\di P'}{\di Q'}(y) Q'(\di y) = \mathbb{E}[R'g(Y)] \, .
\end{align}
Hence $\mathbb{E}[R|Y] = R'$. Since $R'$ is a function of $Y$, conditioning on it gives by the tower property
\begin{equation}
    \mathbb{E}[R|R'] = \mathbb{E}[\mathbb{E}[R|Y]|R'] = \mathbb{E}[R'|R']=R' \, .
\end{equation}
For a convex function $f$ one then gets by conditional Jensen that 
\begin{equation}
    f(R') = f(\mathbb{E}[R|R'])\leq\mathbb{E}[f(R)|R'].
\end{equation}
Taking the expectation then gives $\mathbb{E}[f(R')]\leq\mathbb{E}[f(R)]$ or equivalently $\int f \, \di \nu' \leq \int f \, \di \nu$, which implies the convex order between the measures.
\end{proof}

\subsection{Noncommutativity as curvature}
Let us define the testing curve $h(t)\coloneqq\tr(\rho-t\sigma)_+$ for $t>0$.  For commuting states this function is a finite sum of positive-part functions and piecewise affine. One could wonder, if the converse is true as well. Here we show that this is indeed the case and that the noncommutativity of two states is closely related to the curvature of $h(t)$ and the geometry of the testing region.

\begin{theorem}[Curvature detects noncommutativity]\label{thm:curvature}
For states $\rho$ and $\sigma$ with $\sigma>0$, the following conditions are equivalent.
\begin{enumerate}[(i)]
    \item $\rho$ and $\sigma$ commute
    \item $h(t) = \tr(\rho-t\sigma)_+$ has finitely many affine pieces
    \item $\nu$ has finite support
    \item $\cT_{\rm q}(\rho,\sigma)$ is a polygon
\end{enumerate}
More explicitly, suppose $0$ is not an eigenvalue of $\rho-t\sigma = \sum_i\lambda_i\proj{u_i}$. Then
\begin{equation}\label{eq:curvature}
    h''(t)
    =2\sum_{\lambda_i>0>\lambda_j}
      \frac{|\langle u_i,\sigma u_j\rangle|^2}{\lambda_i-\lambda_j}\geq0.
\end{equation}
\end{theorem}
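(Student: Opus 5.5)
I would first establish the curvature formula \cref{eq:curvature} and then organise the equivalences around it, using \cref{prop:testingregion} to pass between the curve $h$, the measure $\nu$ and the testing region.

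\textbf{The curvature formula.} Fix $t$ with $0\notin\spec(\rho-t\sigma)$. Then the spectral projection $\Pi(t)=\id_{\{\rho-t\sigma>0\}}$ is locally analytic in $t$, because the eigenvalues stay off zero nearby. This gives
\begin{equation*}
h(t)=\tr[(\rho-t\sigma)\Pi(t)].
\end{equation*}
Differentiating once, first-order perturbation theory (the Hellmann--Feynman argument for sums of eigenvalues on an isolated spectral subset) yields $h'(t)=-\tr[\sigma\Pi(t)]$. Differentiating again, the Riesz-projection formula for $\Pi'(t)$, with perturbation $-\sigma$, gives the off-diagonal blocks
\begin{equation*}
\langle u_i,\Pi' u_j\rangle=-\frac{\langle u_i,\sigma u_j\rangle}{\lambda_i-\lambda_j}
\end{equation*}
for $\lambda_i>0>\lambda_j$, together with their adjoints. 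Hence
\begin{equation*}
h''=-\tr[\sigma\Pi']=2\sum_{\lambda_i>0>\lambda_j}\frac{|\langle u_i,\sigma u_j\rangle|^2}{\lambda_i-\lambda_j},
\end{equation*}
which is nonnegative since $\lambda_i-\lambda_j>0$.

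\textbf{The equivalences.}

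(i)$\Rightarrow$(ii): In the commuting case $h(t)=\sum_i(\lambda_i-t\nu_i)_+$, which has finitely many affine pieces.

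(ii)$\Leftrightarrow$(iii): By \cref{eq:stoploss}, $h(t)=\int(r-t)_+\,\di\nu(r)$, so $h''=\nu$ in the sense of distributions. Thus $h$ is piecewise affine with finitely many pieces if and only if $\nu$ is a finite sum of atoms.

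(iii)$\Leftrightarrow$(iv): The testing region is determined by its support function $\tr[x\rho+y\sigma]_+$. On the relevant part of the boundary this is $|x|$ times $h(-y/x)$, and the remaining directions are trivial. So the boundary is a polygon exactly when $h$ has finitely many affine pieces.

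(ii)$\Rightarrow$(i): This is the main step. Suppose $h$ is affine on an interval $I$. The points $t$ where $\rho-t\sigma$ is singular are the roots of $\det(\rho-t\sigma)=\det\sigma\cdot\det(\sigma^{-1/2}\rho\sigma^{-1/2}-t)$. There are at most $n$ of them since $\sigma>0$, so we may shrink $I$ to avoid them. Then $h''=0$ on $I$, and \cref{eq:curvature} forces $\Pi(t)\sigma(\id-\Pi(t))=0$, i.e.\ $[\Pi(t),\sigma]=0$ for all such $t$. Taking $h$ affine on every interval between the finitely many roots $t_1<\dots<t_m$, we get $[\Pi(t),\sigma]=0$ for all $t$ away from the roots.

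\textbf{Concluding (ii)$\Rightarrow$(i).} Because $\sigma>0$, the relation $\rho-t\sigma\geq0$ restricted to a subspace is governed by $\sigma^{-1/2}\rho\sigma^{-1/2}$. Let $t$ increase across a root $t_k$. Since $\Pi(t)\sigma(\id-\Pi(t))=0$, the range of $\Pi(t)$ is $\sigma$-invariant, so $\Pi(t)$ commutes with $\sigma$ and also with $\rho-t\sigma$, hence with $\rho$. Just before and just after $t_k$ the projections commute with both $\rho$ and $\sigma$. Their differences $P_k=\Pi(t_k^-)-\Pi(t_k^+)$ are therefore projections commuting with $\rho$ and $\sigma$, and they sum to $\id$, because $\Pi(t)=\id$ for $t<0$ and $\Pi(t)=0$ for large $t$. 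On the range of $P_k$ we have $\rho=t_k\sigma$, since that range is exactly the kernel of $\rho-t_k\sigma$ inside a jointly invariant block. Hence $\rho=\sum_k t_k P_k\sigma P_k$, and $\rho$ commutes with $\sigma$.

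I expect the main obstacle to be the rigorous version of this last step: identifying the range of $P_k$ with $\ker(\rho-t_k\sigma)$ and checking that the blocks are invariant, together with the analyticity bookkeeping near the degenerate points $t_k$.
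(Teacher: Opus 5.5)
Your argument is correct, and for the central implication (ii)$\Rightarrow$(i) it takes a genuinely different route from the paper.

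The paper decomposes $\cH$ into minimal common invariant blocks of $(\rho,\sigma)$. On every block of dimension $>1$, the operator $\rho_k-t\sigma_k$ has eigenvalues of both signs for $t$ strictly between the extreme eigenvalues of $\sigma_k^{-1/2}\rho_k\sigma_k^{-1/2}$. Zero curvature there would split the block further, so $h_k$, and hence $h$, is strictly convex on that interval.

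You instead use that $h''$ vanishes at \emph{every} non-root $t$. (The breakpoints of $h$ need not a priori be the roots, but $h''$ is continuous off the roots, so vanishing off a finite set suffices.) You deduce that each $\Pi(t)$ reduces both $\rho$ and $\sigma$, and you reconstruct $\rho=\sum_k t_kP_k\sigma P_k$ from the kernel projections at the roots.

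Your route is shorter and constructive. It produces the common block structure explicitly, without the minimal-block induction or the argument that $t_-\neq t_+$. The paper's route buys a local, quantitative statement that your global argument does not give: each nontrivial block forces $h''>0$ on an explicit interval.

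On the other parts:
Your Riesz-projection derivation of the curvature formula is equivalent to the paper's divided-difference computation.
Your support-function argument gives (ii)$\Leftrightarrow$(iv) directly, instead of the paper's (i)$\Rightarrow$(iv)$\Rightarrow$(ii).
For $x<0$ the support function is $|x|h(-y/x)+x+y$ rather than $|x|h(-y/x)$. This is harmless, since the extra term is linear.

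The step you flagged as the main obstacle is routine, but one ingredient is missing from your write-up. The telescoping $\sum_kP_k=\id$, and even $P_k\geq0$, require $\Pi$ to be \emph{constant} on each interval between consecutive roots, not merely commuting with $\rho$ and $\sigma$. This follows from your own formula. $\Pi'$ has vanishing diagonal blocks, and its off-diagonal entries $-\langle u_i,\sigma u_j\rangle/(\lambda_i-\lambda_j)$ are all zero once $\Pi\sigma(\id-\Pi)=0$, so $\Pi'=0$.

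With that in hand, let $E$ and $F$ be the constant values of $\Pi$ on the intervals just left and right of $t_k$. Both reduce $\rho-t_k\sigma$.
For $t<t_k$ in the left interval, on $\mathrm{ran}(\id-E)$ one has $\rho-t_k\sigma=(\rho-t\sigma)-(t_k-t)\sigma<0$.
For $t>t_k$ in the right interval, on $\mathrm{ran}\,F$ one has $\rho-t_k\sigma=(\rho-t\sigma)+(t-t_k)\sigma>0$.
By continuity, $\rho-t_k\sigma\geq0$ on $\mathrm{ran}\,E$ and $\leq0$ on $\mathrm{ran}(\id-F)$.

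Hence $E=\id_{\{\rho\geq t_k\sigma\}}$ and $F=\id_{\{\rho>t_k\sigma\}}$. So $P_k=E-F$ is exactly the projection onto $\ker(\rho-t_k\sigma)$ and commutes with $\sigma$, which closes your argument. This is where $\sigma>0$ is used.
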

\begin{proof}
    See~\cref{sec:apptesting}
\end{proof}
\cref{thm:curvature} gives the direct finite-dimensional argument conjectured in~\cite[Remark~VII.5]{hiai2026hockeystick} and relates to the polyhedral criteria for commutativity in~\cite{li2020jointnumerical}.

\subsection{Variance and contraction inequalities}
As an application, we derive the following trace inequalities.
For states, we define the tracial joint variance as
\begin{equation}\label{eq:VTJ}
    V_{\rm TJ}(\rho\|\sigma) \coloneqq \int_0^\infty r (\log r-D(\rho\|\sigma))^2\, \di \nu(r) \, .
\end{equation}
This is the classical relative-entropy variance of $(P,Q)$. The usual
quantum relative-entropy variance is given by
\begin{equation}\label{eq:Vquantum}
    V(\rho\|\sigma) \coloneqq \tr\! \left[\rho\bigl(\log\rho-\log\sigma-D(\rho\|\sigma)\bigr)^2\right].
\end{equation}

\begin{proposition}[Variance bound]\label{prop:variance}
For states $\rho,\sigma$, we have
\begin{equation}\label{eq:variancebound}
    V_{\rm TJ}(\rho\|\sigma)\leq V(\rho\|\sigma).
\end{equation}
\end{proposition}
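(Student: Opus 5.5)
The plan is to reduce the statement to a comparison of second moments and then obtain that comparison from a Rényi-type inequality by expanding to second order around $\alpha=1$. Both variances are centred at the same value $D(\rho\|\sigma)$, since \cref{thm:relativ_int} gives $\int r\log r\,\di\nu=D(\rho\|\sigma)$, and $\int r\,\di\nu=\tr\rho=1$. Hence \cref{eq:variancebound} is equivalent to
\begin{equation}
    2\int_{\mathbb R^2} a\log^2\frac ab\,\di\mu_{\rho,\sigma}(a,b)\;\leq\;\tr\bigl[\rho(\log\rho-\log\sigma)^2\bigr].
\end{equation}
As in \cref{lem:alogb}, restricting to $\supp\,\sigma$ does not change the measure away from the origin. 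If $\supp\,\rho\not\subseteq\supp\,\sigma$, both sides are understood as $+\infty$ and there is nothing to prove. So I assume $\sigma>0$ throughout.

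Instead of comparing the second moments directly, I would compare generating functions. Set $\psi_{\rm TJ}(\alpha)\coloneqq\log\int r^\alpha\,\di\nu=\log Q^{\rm LC}_\alpha(\rho\|\sigma)$ (see \cref{rem:layercakef}) and $\psi_{\rm P}(\alpha)\coloneqq\log\tr[\rho^\alpha\sigma^{1-\alpha}]$ (Petz). The values and first derivatives at $\alpha=1$ agree for the two functions:
\begin{equation}
    \psi_{\rm TJ}(1)=\psi_{\rm P}(1)=0,\qquad \psi_{\rm TJ}'(1)=\psi_{\rm P}'(1)=D(\rho\|\sigma).
\end{equation}
The second derivatives at $\alpha=1$ are exactly $V_{\rm TJ}(\rho\|\sigma)$ and $V(\rho\|\sigma)$. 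This is the standard cumulant computation: for $V$, use the spectral decompositions of $\rho$ and $\sigma$; for $V_{\rm TJ}$, differentiate under the integral sign, which is justified by \cref{rem:bound} because $a/b$ is bounded on the support. Hence it suffices to show $\psi_{\rm TJ}\le\psi_{\rm P}$ on a punctured neighbourhood of $\alpha=1$. Since the zeroth- and first-order terms agree, a Taylor expansion then forces $\psi_{\rm TJ}''(1)\le\psi_{\rm P}''(1)$.

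The key step, and the main obstacle, is the inequality
\begin{equation}
    2\int a^\alpha b^{1-\alpha}\,\di\mu_{\rho,\sigma}\le \tr[\rho^\alpha\sigma^{1-\alpha}]\qquad\text{for }\alpha\text{ near }1.
\end{equation}
It must hold on \emph{both} sides of $1$, which rules out pure convexity or concavity arguments: a concavity argument only produces bounds by traces of linear combinations of $\rho$ and $\sigma$, which are too weak. My first attempt would be a perturbative version of the trick in \cref{lem:alogb}. I would use \cref{thm:tracialjoint} with $f$ chosen so that $H(f)(z)=z^{\alpha}$ on $z>0$, i.e.\ $f(z)=\alpha(\alpha+1)z^\alpha$ by \cref{prop:basic} (i). Differentiating $\tr[(xA+yB)^\alpha]$ in $y$ at $(x,y)=(1,0)$, and in $x$ at $(0,1)$, yields mixed scalar integrals of the form $\int a^{\alpha-1}b\,\di\mu$ and $\int a\,b^{\alpha-1}\,\di\mu$. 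These are expressed through Fréchet derivatives of power functions, and I would try to compare those derivatives with $\tr[\rho^\alpha\sigma^{1-\alpha}]$ using the integral representation of $\D(X^\alpha)$.

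If this route stalls, the fallback is to work directly at second order. I would differentiate \cref{eq:diffdict_log} once more to express $\int a\log^2(a/b)\,\di\mu$ through the second Fréchet derivative of $\log$. Then I would apply Cauchy–Schwarz in the Hilbert–Schmidt inner product, with respect to the positive kernel of $\D\log(\sigma)$, to bound the result by $\tr[\rho(\log\rho-\log\sigma)^2]$. In that version, the crux is checking that the divided-difference kernel arising there is dominated by the one defining $V$.
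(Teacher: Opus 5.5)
\paragraph{There is a genuine gap: the key comparison is never proved.} Your reduction is correct and follows the paper's skeleton. With $f(\alpha)=\int r^\alpha\,\di\nu$ and $g(\alpha)=\tr[\rho^\alpha\sigma^{1-\alpha}]$ (taking logarithms changes nothing), the two functions agree in value and first derivative at $\alpha=1$, and their second derivatives give the two variances. After that, the entire content of the proposition is the comparison $f(\alpha)\le g(\alpha)$, and you do not establish it. The paper does not derive it from the tracial joint measure either; it imports it for $\alpha\in(0,1)$ from \cite[Proposition~3.9]{beigi2025properties}.

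Your assertion that the comparison ``must hold on both sides of $1$'' is false, and it leads you to discard the route that works. Since $h=g-f$ is $C^2$ near $1$ with $h(1)=h'(1)=0$, Taylor's formula gives $h''(1)=\lim_{\alpha\uparrow1}2h(\alpha)/(1-\alpha)^2$. So $h\ge0$ on a left neighbourhood of $1$ already forces $h''(1)\ge0$, i.e.\ $V_{\rm TJ}\le V$. This is exactly how the paper argues, and it is why a one-sided result for $\alpha\in(0,1)$ suffices.

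Neither of your two routes closes the gap.
\begin{itemize}
\item Differentiating $\tr[(xA+yB)^\alpha]$ yields only mixed moments such as $\int a^{\alpha-1}b\,\di\mu_{\rho,\sigma}$. For $\rho>0$ one gets $\alpha(\alpha+1)\int a^{\alpha-1}b\,\di\mu_{\rho,\sigma}=\tr[\rho^{\alpha-1}\sigma]$. These are not $\int a^\alpha b^{1-\alpha}\,\di\mu_{\rho,\sigma}$, and ``comparing with the integral representation of $\D(X^\alpha)$'' names no actual inequality.
\item The fallback fails as stated. Parameter derivatives of \cref{eq:diffdict_log} only produce integrands that are rational functions of $(a,b)$, so $a\log^2(a/b)$ can never arise this way. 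You would need an extra parameter integral, e.g.\ from $\log(a/b)=\int_0^1\frac{a-b}{sa+(1-s)b}\,\di s$. Even then, the ``kernel domination'' you would have to check is the whole difficulty, and you give no argument for it.
\end{itemize}
As it stands, the proposal reduces the statement to a nontrivial inequality between the layer-cake and Petz quantities and stops there. To complete it, invoke (or prove) $Q^{\mathrm{LC}}_\alpha(\rho\|\sigma)\le\tr[\rho^\alpha\sigma^{1-\alpha}]$ for $\alpha\in(0,1)$, and then apply the one-sided Taylor argument above.
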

\begin{proof}
Define the functions $f(\alpha)=\int r^\alpha\,\di\nu(r)$ and $g(\alpha)=\tr(\rho^\alpha\sigma^{1-\alpha})$.
Then $f'(1) = \int r \log r \, \di \nu(r) = D(\rho\|\sigma)=g'(1)$ and $f''(1) = \int r(\log r)^2 \, \di \nu (r)$.
For $\alpha \in(0,1)$, the comparison in~\cite[Proposition~3.9]{beigi2025properties} shows
$f(\alpha)\leq g(\alpha)$. 
At $\alpha=1$, we have $f(1)=g(1)=1$ and $f'(1)=g'(1)=D(\rho\|\sigma)$.
Thus $h\coloneqq g-f$ satisfies $h(\alpha)\geq0$ for $\alpha<1$ and
$h(1)=h'(1)=0$. Dividing by $(1-\alpha)^2$ and taking $\alpha\uparrow1$ gives
$0\leq \frac{1}{2} h''(1)$ and therefore $f''(1)\leq g''(1)$. 
Since $g''(1) = \tr[\rho(\log \rho - \log \sigma)^2]$. The bound follows from subtracting $D(\rho\|\sigma)^2$ on both sides.
\end{proof}

\begin{corollary}[One-shot likelihood tails]\label{cor:tails}
For states $\rho,\sigma$, set $d=D(\rho\|\sigma)$ and $r>0$. Then
\begin{equation}\label{eq:likelihoodtails}
\begin{split}
    \tr\!\left[\rho\,\id_{\{\rho<\ee^{d-r}\sigma\}}\right]
    &\leq\frac{V_{\rm TJ}(\rho\|\sigma)}{V_{\rm TJ}(\rho\|\sigma)+r^2}
    \leq\frac{V(\rho\|\sigma)}{V(\rho\|\sigma)+r^2},\\
    \tr\!\left[\rho\,\id_{\{\rho>\ee^{d+r}\sigma\}}\right]
    &\leq\frac{V_{\rm TJ}(\rho\|\sigma)}{V_{\rm TJ}(\rho\|\sigma)+r^2}
    \leq\frac{V(\rho\|\sigma)}{V(\rho\|\sigma)+r^2}.
\end{split}
\end{equation}
\end{corollary}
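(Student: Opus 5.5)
The plan is to convert both traces into tail probabilities of the log-likelihood ratio under $P$, and then apply a one-sided Chebyshev (Cantelli) inequality, using that $P$ has mean $d$ and variance $V_{\rm TJ}$ for $\log R$.

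\emph{Step 1 (trace side to scalar side).} By \cref{thm:diffdict} (ii) with $(x,y)=(1,0)$ and $s=\ee^{d-r}$ (strict inequality, reversed version),
\begin{equation}
    \tr\!\left[\rho\,\id_{\{\rho<\ee^{d-r}\sigma\}}\right]=2\int a\,\mathbf{1}_{\{a<\ee^{d-r}b\}}\meas = P\{R<\ee^{d-r}\}.
\end{equation}
Here we use that $b>0$ $\mu$-almost everywhere after restricting to $\sigma>0$ (\cref{rem:bound}), so $\{a<sb\}=\{R<s\}$. In the same way, $\tr[\rho\,\id_{\{\rho>\ee^{d+r}\sigma\}}]=P\{R>\ee^{d+r}\}$. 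Set $L=\log R$, viewed as a random variable under the probability measure $P$; it may take the value $-\infty$ only on a $P$-null set, since there $a=0$. The two quantities become
\begin{equation}
    P\{L-d<-r\}\qquad\text{and}\qquad P\{L-d>r\}.
\end{equation}

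\emph{Step 2 (moments of $L$).} Push forward by $R$: since $\di P=R\,\di Q$ and $\nu=R_\#Q$, we get $\mathbb{E}_P[L]=\int r\log r\,\di\nu=d$ by \cref{thm:relativ_int}. Likewise $\mathrm{Var}_P(L)=\int r(\log r-d)^2\di\nu=V_{\rm TJ}$, as in \cref{eq:VTJ}.

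\emph{Step 3 (Cantelli).} For any $\lambda\geq0$,
\begin{equation}
    P\{L-d>r\}\leq P\{(L-d+\lambda)^2\geq(r+\lambda)^2\}\leq\frac{V_{\rm TJ}+\lambda^2}{(r+\lambda)^2}.
\end{equation}
Choosing $\lambda=V_{\rm TJ}/r$ gives $V_{\rm TJ}/(V_{\rm TJ}+r^2)$. Applying the same argument to $-(L-d)$ handles the lower tail.

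\emph{Step 4 (second inequality).} The map $v\mapsto v/(v+r^2)$ is increasing, so \cref{prop:variance} yields the comparison with $V(\rho\|\sigma)$. If $V_{\rm TJ}=\infty$ the bound is trivially $1$; in finite dimensions both variances are finite anyway.

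\emph{Main obstacle.} The only delicate point is Step 1: checking that the strict and reversed versions of \cref{thm:diffdict} (ii) apply exactly, and justifying the reduction to $\sigma>0$ when $\supp\rho\subseteq\supp\sigma$. The case $\supp\rho\not\subseteq\supp\sigma$ needs a separate remark, since then $d=\infty$ and the statement must be read with that convention, or excluded. The remaining steps are standard.
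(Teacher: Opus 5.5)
Your proposal is correct and follows the paper's proof. You use \cref{thm:diffdict} (ii) to turn both traces into $P$-tail probabilities of $\log R$, which has mean $d$ and variance $V_{\rm TJ}$. You then apply Cantelli's inequality and finish with \cref{prop:variance} plus monotonicity of $v\mapsto v/(v+r^2)$; you just spell out the Cantelli derivation and the reduction to $\sigma>0$, which the paper leaves implicit.
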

\begin{proof}
Under $P$, the random variable $X=\log R$ has mean $D(\rho\|\sigma)$ and variance $V_{\rm TJ}(\rho\|\sigma)$. Then~\cref{thm:diffdict} (ii) gives
    $\tr[\rho\,\id_{\{\rho<s\sigma\}}]=P\{R<s\}$ and 
    $\tr[\rho\,\id_{\{\rho>s\sigma\}}]=P\{R>s\}$.
Applying Cantelli's inequality to
the two tails of $X-D(\rho\|\sigma)$, followed by~\cref{prop:variance}, proves the statement.
\end{proof}

For another application of the tracial joint measure, define
\begin{equation}\label{eq:chiBKM}
    \chi^2(\rho\|\sigma)
    \coloneqq \tr\left[(\rho-\sigma)\,\D\log(\sigma)[\rho-\sigma]\right] = \int(r-1)^2 \, \di \nu \, .
\end{equation}
Then we can formulate the following estimate.
\begin{proposition}[Contraction bound]\label{thm:stability}
Let $\rho,\sigma$ be states and $\Phi$ a positive and trace preserving map.
Furthermore, suppose there exist positive finite $m$ and $M$ such that $m\sigma\leq\rho\leq M\sigma$.
Then define
\begin{align}
    \Delta D&\coloneqq D(\rho\|\sigma)-D(\Phi(\rho)\|\Phi(\sigma)),\label{eq:DeltaD}\\
    \Delta\chi^2&\coloneqq \chi^2(\rho\|\sigma) -\chi^2(\Phi(\rho)\|\Phi(\sigma)).\label{eq:Deltachi}
\end{align}
Then
\begin{equation}\label{eq:stability}
    \frac{\Delta\chi^2 }{2M}\leq\Delta D \leq\frac{\Delta\chi^2}{2m}\,.
\end{equation}
Moreover,
\begin{equation}\label{eq:hbound}
    \Delta D\geq\frac{1}{2M} \sup_{\gamma\geq0}\left(\tr(\rho-\gamma\sigma)_+ -\tr(\Phi(\rho)-\gamma\Phi(\sigma))_+ \right)^2.
\end{equation}
\end{proposition}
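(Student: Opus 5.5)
The plan is to work entirely on the classical side, using the coupling $(R,R')$ from \cref{cor:data-processing}. By \cref{thm:relativ_int} and \cref{cor:classical}, $D(\rho\|\sigma)=\int r\log r\,\di\nu(r)=\mathbb E[\varphi(R)]$ with $\varphi(r)=r\log r$, and likewise $D(\Phi(\rho)\|\Phi(\sigma))=\mathbb E[\varphi(R')]$. By \cref{eq:chiBKM}, $\chi^2(\rho\|\sigma)=\mathbb E[(R-1)^2]$ and $\chi^2(\Phi(\rho)\|\Phi(\sigma))=\mathbb E[(R'-1)^2]$. This representation of $\chi^2$ comes from \cref{thm:diffdict} (i) with $S=\sigma$ and $V=W=\rho-\sigma$, which gives $2\int (a-b)^2/b\,\di\mu=\int(r-1)^2\di\nu$.

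First I confine both ratios to $[m,M]$. From $m\sigma\leq\rho\leq M\sigma$, \cref{rem:bound} gives $mb\leq a\leq Mb$ $\mu$-a.e., so $R\in[m,M]$ $\nu$-a.s. Since $\Phi$ is positive, $m\Phi(\sigma)\leq\Phi(\rho)\leq M\Phi(\sigma)$, so the same argument gives $R'\in[m,M]$ almost surely. Consequently all quantities are bounded and there are no integrability issues. Both supports are also contained in $\supp\,\sigma$, respectively $\supp\,\Phi(\sigma)$, after the restriction described at the start of \cref{sec:testing}.

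The key step is to use the martingale property $\mathbb E[R\mid R']=R'$. For any bounded function $g$ of $R'$ it gives $\mathbb E[(R-R')g(R')]=0$. Applying this with $g\equiv1$ and $g(r)=r$ yields
\begin{equation}
\Delta\chi^2=\mathbb E[R^2-R'^2]=\mathbb E[(R-R')^2].
\end{equation}
Applying it with $g=\varphi'$ yields
\begin{equation}
\Delta D=\mathbb E\bigl[\varphi(R)-\varphi(R')-\varphi'(R')(R-R')\bigr].
\end{equation}
By Taylor's theorem with Lagrange remainder, the bracket equals $\tfrac{1}{2\xi}(R-R')^2$ for some $\xi$ between $R$ and $R'$, hence $\xi\in[m,M]$. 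Taking expectations gives \cref{eq:stability}.

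For \cref{eq:hbound}, I use \cref{eq:stoploss} for both pairs. This writes the difference of testing curves as $\mathbb E[(R-\gamma)_+-(R'-\gamma)_+]$. This quantity is nonnegative by the convex order $\nu\succeq_{\rm cx}\nu'$. It is also at most $\mathbb E|R-R'|$, because $x\mapsto(x-\gamma)_+$ is $1$-Lipschitz. Since the coupling lives on a probability space (the law $Q$ is a probability measure), Jensen's inequality gives $\mathbb E|R-R'|\leq(\mathbb E(R-R')^2)^{1/2}$. Squaring, combining with $\Delta D\geq\Delta\chi^2/(2M)$, and taking the supremum over $\gamma$ finishes the proof.

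The main obstacle I expect is bookkeeping rather than analysis. One point is making sure the coupling of \cref{cor:data-processing} is genuinely a coupling of the output likelihood ratio with law $\nu'$ defined via $\Phi(\sigma)$, which may fail to be faithful. The other is checking that $R'$ is bounded below by $m$ almost surely, since this is needed to apply the Taylor bound on all of $[m,M]$.
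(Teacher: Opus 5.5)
Your proof is correct and follows the paper's argument. You use the same martingale coupling from \cref{cor:data-processing}, the same identity $\Delta\chi^2=\mathbb{E}[(R-R')^2]$, and the same $1$-Lipschitz plus Cauchy--Schwarz step for \cref{eq:hbound}. The only variation is how you get \cref{eq:stability}: you use $\mathbb{E}[(R-R')\varphi'(R')]=0$ to write $\Delta D=\mathbb{E}\bigl[(R-R')^2/(2\xi)\bigr]$ exactly via Taylor's theorem, whereas the paper applies the convex order to $r\log r-r^2/(2M)$ and $r^2/(2m)-r\log r$. Those two functions are convex on $[m,M]$ but not on all of $(0,\infty)$, a point your route avoids.
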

\begin{proof}
    Take $R,R'$ and the coupling as defined in~\cref{cor:data-processing}. By~\cref{rem:bound}, $R$ and $R'$ take values in $[m,M]$ and by~\cref{cor:data-processing}, we have $\mathbb{E}[R|R']=R'$. Note, that $\chi^2(\rho\|\sigma) = \mathbb{E}[(R-1)^2]$ and since $\mathbb{E}[R]=1$, this gives $\chi^2(\rho\|\sigma) = \mathbb{E}[R^2]-1$. Therefore we can write $\Delta \chi^2 = \mathbb{E}[R^2]-\mathbb{E}[(R')^2]=\mathbb{E}[(R-R')^2]$ where the last step uses the property of the coupling. By defining $f(r) = r\log r$, we can write $\Delta D= \mathbb{E}[f(R)] - \mathbb{E}[f(R')]$ as well.
    Now consider the functions $g_1(r)=f(r)-\frac{r^2}{2M}$ and $g_2(r)=\frac{r^2}{2m}-f(r)$. These are convex on $[m,M]$ since $f''(r)=\frac{1}{r}$ and by applying~\cref{cor:data-processing}, we get $\mathbb{E}[g_1(R)]\geq\mathbb{E}[g_1(R')]$ and  $\mathbb{E}[g_2(R)]\geq\mathbb{E}[g_2(R')]$. This translates into $\frac{\Delta \chi^2}{2M}\leq \Delta D$ and $\Delta D \leq \frac{\Delta\chi^2}{2m}$. For the final inequality, consider the function $g_\gamma(r) = (r-\gamma)_+$. By~\cref{eq:stoploss}, we have $\Delta h(\gamma) \coloneqq \tr[\rho-\gamma\sigma]_+ - \tr[\Phi(\rho) - \gamma \Phi(\sigma)]_+ = \mathbb{E}[g_\gamma(R)] - \mathbb{E}[g_\gamma(R')]$ which is positive by~\cref{cor:data-processing} due to $g_\gamma$ being convex. Since $g_\gamma$ also fulfills $|g_\gamma(x) - g_\gamma(y)|\leq|x-y|$, we find $\Delta h(\gamma) = \mathbb{E}[g_\gamma(R) - g_\gamma(R')]\leq \mathbb{E}|g_\gamma(R) - g_\gamma(R')|\leq \mathbb{E}|R-R'|$. Then applying Cauchy-Schwarz leads to $\Delta h \leq \sqrt{\mathbb{E}[(R-R')^2]}$ which by the previous argument  gives $\Delta \chi^2\geq (\Delta h)^2$ for every $\gamma \geq0$.
\end{proof}

\section{Conclusion}
The main point of our work is that Heinävaara's measure can be fruitfully used in quantum information theory and that it allows to derive trace (in)-equalities from scalar expressions. We use it in particular for integral representations of the relative entropy and characterization of the quantum binary-testing region, but the possible applications are not necessarily limited to that. 

An interesting direction could be for instance the application to continuity. One can bound the distance between two integral expressions by $\left| \int f(x) \, \di \mu_1 - \int f(x')\, \di \mu_2\right|\leq\inf_\Pi \int |f(x)-f(x')| \, \di \Pi$, where the optimization is done over couplings between $\mu_1$ and $\mu_2$. This is essentially a Wasserstein-distance between tracial joint measure. In the case of von Neumann entropy, this reproduces known bounds, but in the general case it remains elusive.

\paragraph{Acknowledgements:}
We thank Joel Tropp for giving a talk at ETH that made us aware of Heinävaara's thesis and David Sutter for insightful and helpful discussions. Furthermore, we acknowledge GPT 5.6 Sol for assistance in \cref{sec:testing} and revision of the document. The author takes responsibility for the final manuscript.
This work was supported by the ETH Quantum Center and SNSF-Grant No. 200021E\_232425.


\appendix
\section{Trace derivative}
\label{sec:appderiv}

\begin{lemma}\label{lem:trace-deriv}
Let $S,V\in M_n(\mathbb{C})$ be Hermitian, let $I\subseteq\R$ be an open interval, let
$f\in C^1(I)$, and suppose there is $\varepsilon>0$ such that $\spec(S+\tau V)\subset I$ for all
$|\tau|<\varepsilon$. Then $\tau\mapsto\tr[f(S+\tau V)]$ is differentiable on
$(-\varepsilon,\varepsilon)$ and
    \begin{equation}
        \frac{\di}{\di \tau} \tr[f(S+\tau V)] = \tr[f'(S+\tau V)\,V] ,
    \end{equation}
in particular at $\tau=0$ it equals $\tr[f'(S)V]$.
\end{lemma}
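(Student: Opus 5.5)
The plan is to reduce the trace derivative to the polynomial case and then pass to $C^1$ functions by approximation. The underlying fact is $\frac{\di}{\di\tau}\tr[p(S+\tau V)] = \tr[p'(S+\tau V)V]$ for polynomials $p$. By linearity it suffices to check this for monomials $p(x)=x^k$. Expanding $(X+hV)^k$ to first order in $h$ gives
\begin{equation}
\sum_{j=0}^{k-1} X^j V X^{k-1-j}.
\end{equation}
By cyclicity of the trace, each of these $k$ terms has trace $\tr[X^{k-1}V]$, so the sum is $\tr[kX^{k-1}V]=\tr[p'(X)V]$. This is valid for every Hermitian $X$, with no spectral assumption.

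For general $f\in C^1(I)$, first localize. The map $\tau\mapsto S+\tau V$ is continuous and eigenvalues depend continuously on the matrix. So for any closed subinterval $[\tau_0-\delta,\tau_0+\delta]\subset(-\varepsilon,\varepsilon)$, the union of the spectra of $S+\tau V$ over that subinterval is a compact subset $K\subset I$. Choose a compact interval $J$ with $K\subset J\subset I$.

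On $J$, use Weierstrass approximation for $f'$: take polynomials $q_m\to f'$ uniformly on $J$. Fix a point $c\in J$ and set $p_m(x)=f(c)+\int_c^x q_m$. Then $p_m\to f$ and $p_m'=q_m\to f'$, both uniformly on $J$.

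Write $X_\tau = S+\tau V$. Since $X_\tau$ is Hermitian with spectrum in $J$, spectral calculus gives $\norm{g(X_\tau)}\le\sup_J|g|$ for any $g$. Hence $\tr[p_m(X_\tau)]\to\tr[f(X_\tau)]$ and $\tr[p_m'(X_\tau)V]\to\tr[f'(X_\tau)V]$, both uniformly in $\tau$ on the subinterval. The derivative bound uses $|\tr[YV]|\le n\norm{Y}\norm{V}$.

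The polynomial case then lets us apply the theorem on uniform convergence of derivatives. Each $\tau\mapsto\tr[p_m(X_\tau)]$ is differentiable with derivative $\tr[p_m'(X_\tau)V]$. These derivatives converge uniformly, and the functions themselves converge pointwise. Therefore the limit $\tau\mapsto\tr[f(X_\tau)]$ is differentiable with derivative $\tr[f'(X_\tau)V]$.

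As an alternative to the uniform-convergence theorem, one can use the integral form
\begin{equation}
p_m(X_{\tau})-p_m(X_{\tau_0}) = \int_{\tau_0}^{\tau}\tr[p_m'(X_s)V]\,\di s
\end{equation}
and pass to the limit inside the integral. Continuity of $s\mapsto\tr[f'(X_s)V]$ follows because it is a uniform limit of continuous functions, and the fundamental theorem of calculus then gives the derivative. Setting $\tau=0$ yields the last claim.

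The main obstacle is making sure the spectrum stays inside a fixed compact subinterval of $I$, uniformly for $\tau$ near $\tau_0$, so that the approximation is uniform. This is handled by continuity of eigenvalues (for example Weyl's inequality, $|\lambda_k(X)-\lambda_k(Y)|\le\norm{X-Y}$), together with compactness of $[\tau_0-\delta,\tau_0+\delta]$. Once that is in place, the remaining steps are routine.
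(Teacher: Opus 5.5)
Your proof is correct but takes a different route from the paper. The paper proves the lemma in one line by citing the Daleckii--Krein (divided-difference) formula for the Fréchet derivative, $\D f(X)[V]=U\bigl[f^{[1]}(\lambda_i,\lambda_j)\,(U^*VU)_{ij}\bigr]_{i,j}U^*$, and observing that the trace only sees the diagonal entries $f^{[1]}(\lambda_i,\lambda_i)=f'(\lambda_i)$.

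You never compute the full operator derivative. Cyclicity of the trace makes the polynomial case trivial. Then localization to a compact interval $J$ via Weyl's inequality, Weierstrass approximation of $f'$, and the theorem on uniformly convergent derivatives extend the result to $f\in C^1$.

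Your route is self-contained under exactly the stated hypothesis $f\in C^1(I)$. The Fréchet-derivative results in Higham's chapter are formulated for general, not necessarily normal, matrices and assume more smoothness than $C^1$. The $C^1$ Hermitian version of the divided-difference formula (e.g.\ in Bhatia's \emph{Matrix Analysis}) is itself proved by essentially your polynomial-approximation argument.

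The paper's route gives the full operator-valued derivative. Differentiating that once more yields the second-order formula \eqref{eq:second-derivative} used in the curvature theorem. Your cyclicity trick does not collapse second derivatives, since $\tr[X^iVX^jV]$ genuinely depends on $j$, so there you would need the divided-difference bookkeeping anyway.

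Two cosmetic points. In your integral alternative, the left-hand side should read $\tr[p_m(X_\tau)]-\tr[p_m(X_{\tau_0})]$. And if $K$ is a single point, enlarge $J$ to a nondegenerate interval inside $I$, which is possible because $I$ is open.
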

\begin{proof}
Follows from the divided-difference
representation of the Fréchet derivative of a matrix function
\cite[Chapter~3]{Higham2008Functions}, whose diagonal entries in an eigenbasis of $S+\tau V$ are
$f'$ evaluated at the eigenvalues.
\end{proof}

\section{Differentiation under the tracial joint spectral measure}
\label{sec:appdiff}
Here, we prove the different items of~\cref{thm:diffdict}.
Recall that $\D\log(S)[\cdot]$ denotes the Fréchet derivative\footnote{Technically, this is the Gateaux derivative.} of the matrix logarithm at $S$.
\begin{equation}
    \D\log(S)[W] = \frac{\di}{\di\tau}\log(S+\tau W)\Bigr|_{\tau=0} \, .
\end{equation}

\begin{lemma}[i]\label{lem:logderiv}
Let $A,B\geq0$, $x,y,u,v,p,q\in\mathbb{R}$ and define $S = xA+yB$ , $V = uA+vB $, and $W = pA+qB$ .
Furthermore assume $S>0$. Then
\begin{equation}\label{eq:logderiv}
    \tr\left[V\, \D\log(S)[W]\right] = 2\int_{\mathbb{R}^2} \frac{(ua+vb)(pa+qb)}{xa+yb} \meas .
\end{equation}
\end{lemma}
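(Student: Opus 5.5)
The plan is to differentiate the identity of \cref{cor:xlogx} twice in the scalar parameters, once in the direction of $W$ and once in the direction of $V$. By polarization it suffices to establish a quadratic-form version, but it is just as easy to use two independent parameters. Set $S(\sigma,\tau) = S + \sigma V + \tau W = (x+\sigma u + \tau p)A + (y+\sigma v+\tau q)B$. Since $S>0$, there is $\varepsilon>0$ such that $S(\sigma,\tau)>0$ for $|\sigma|,|\tau|<\varepsilon$. Positivity of the pencil $x'A+y'B$ transfers to the measure by \cref{prop:basic} (ii): for $\mu_{A,B}$-almost every $(a,b)$ we get $x'a+y'b\geq0$.

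The key steps are as follows.
\begin{enumerate}[(1)]
\item Extend \cref{cor:xlogx} to arbitrary real $(x',y')$ with $x'A+y'B\geq0$. The proof of \cref{cor:xlogx} only uses $H(f)(z)=z\log z$ for $z>0$ and $0$ otherwise, so \cref{thm:tracialjoint} gives
\begin{equation*}
\tr[S(\sigma,\tau)\log S(\sigma,\tau)] = \int g\bigl((x+\sigma u+\tau p)a + (y+\sigma v+\tau q)b\bigr)\meas
\end{equation*}
with $g(z)=z(2\log z+3)$ for $z>0$.
\item Differentiate the trace side. By \cref{lem:trace-deriv} applied to $\phi(z)=z\log z$ with $\phi'(z)=\log z+1$, we get $\partial_\sigma \tr[\phi(S(\sigma,\tau))] = \tr[(\log S(\sigma,\tau)+\id)V]$. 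Differentiating in $\tau$ at $0$ then yields $\tr[V\,\D\log(S)[W]]$, because $\tr[V]$ does not depend on $\tau$.
\item Differentiate the scalar side. With $\ell = xa+yb$, one has $g'(z)=2\log z+5$ and $g''(z)=2/z$, so $\partial_\tau\partial_\sigma$ at $0$ gives the integrand
\begin{equation*}
(ua+vb)(pa+qb)\,\frac{2}{xa+yb}.
\end{equation*}
Equating the two sides gives \cref{eq:logderiv}.
\end{enumerate}

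The main obstacle is justifying the interchange of the two derivatives with the integral, since $\log$ and $1/z$ blow up near $z=0$ and $\mu_{A,B}$ has infinite mass near the origin (density $\sim (1-t)/t$ along rays). I would first show that $x'a+y'b$ is comparable to $\|(a,b)\|$ on the support. Because $S(\sigma,\tau)\geq cA + cB$ for some $c>0$ (using $S>0$ and $A,B\geq0$, after restricting to $\supp(A+B)$), \cref{prop:basic} (ii) applied to $S(\sigma,\tau)-c(A+B)\geq0$ gives $x'a+y'b\geq c(a+b)$. Also $a,b\geq0$ almost everywhere by the same proposition. Hence the second-derivative integrand is bounded by $C(a+b)^2/(c(a+b))\leq C'(a+b)$, uniformly in $|\sigma|,|\tau|<\varepsilon$. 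Similarly, the first derivative $(ua+vb)(2\log(\cdot)+5)$ is bounded by $C(a+b)(1+|\log(a+b)|)$ plus constants on the bounded support from \cref{prop:basic} (ii).

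Both dominating functions are $\mu_{A,B}$-integrable. Indeed, $\int (a+b)\meas = \tfrac12\tr[A+B]$ by \cref{prop:basic} (i) with $\alpha=1$. For the logarithmic term, $(a+b)|\log(a+b)|\leq C_\delta\bigl((a+b)^{1-\delta}+(a+b)^{1+\delta}\bigr)$, and both powers are integrable by \cref{prop:basic} (i). Dominated convergence, applied through the mean value theorem, then licenses differentiating under the integral twice.

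A minor point is that the restriction to $\supp(A+B)$ leaves the measure unchanged. This follows by the uniqueness argument already used in \cref{lem:alogb}.
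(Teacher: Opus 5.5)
Your proposal is correct and follows the same route as the paper: apply \cref{cor:xlogx} to $S+\sigma V+\tau W$ (the extension to real coefficients that keep the pencil positive is harmless) and take the mixed derivative at $0$, using \cref{lem:trace-deriv} on the trace side and $g''(z)=2/z$ on the scalar side. The only difference is that you make explicit the domination the paper just calls ``dominated convergence'', namely $x'a+y'b\geq c(a+b)$ from \cref{prop:basic}~(ii) and integrability of $(a+b)$ and $(a+b)\lvert\log(a+b)\rvert$ from \cref{prop:basic}~(i); this is exactly what is needed, since $\mu_{A,B}$ has infinite mass near the origin and so every ``constant'' in your bounds must carry the factor $(a+b)$.
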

\begin{proof}
     By~\cref{cor:xlogx}, we have
     \begin{equation}
         \tr[S\log S] = \int z(2 \log z+3) \, \di \mu, \qquad z=xa+yb\,.
     \end{equation}
     If we now replace $S$ by $S+tV+rW$ and take the mixed derivative, we get
     \begin{equation}
         \partial_t\partial_r \tr[(S+tV+rW) \log(S + tV+rW)]\Bigr|_{t=r=0} = \tr[V\, \D\log(S)[W]] \, .
     \end{equation}
     On the scalar side, by dominated convergence, we can take the derivative into the integral, which leads to
     \begin{equation}
         \tr[V\, \D\log(S)[W]] = 2 \int\frac{(ua+vb)(pa+qb)}{xa+yb} \meas \, .
     \end{equation}
\end{proof}

\begin{lemma}[ii]\label{lem:proj}
    Let $A,B\geq0$ and $x,y \in \mathbb{R}$, then for every $s\in\mathbb{R}$, we have
    \begin{equation}
        \tr[(xA+yB)\id_{\{A>sB\}}] = 2 \int(xa+yb)\mathbf{1}_{\{a>sb\}} \meas
    \end{equation}
    and
    \begin{equation}
        \tr[(xA+yB)\id_{\{A<sB\}}] = 2 \int(xa+yb)\mathbf{1}_{\{a<sb\}} \meas
    \end{equation}
\end{lemma}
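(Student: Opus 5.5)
The idea is to compare one-sided derivatives of the positive-part identity from \cref{prop:basic}~(i), as announced in \cref{sec:diffdict}. Set $C=A-sB$, and for a Hermitian $V=xA+yB$ consider
\begin{equation}
    F(\tau)\coloneqq \tr[C+\tau V]_- = 2\int (c+\tau v)_-\meas,
    \qquad c=a-sb,\ v=xa+yb .
\end{equation}
The identity holds for every $\tau\in\mathbb R$, because $C+\tau V=(1+\tau x)A+(\tau y-s)B$ is again a real combination of $A$ and $B$. The plan is to compute the one-sided derivatives of $F$ at $\tau=0$ on both the matrix side and the scalar side, and then match them.

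On the scalar side we have the pointwise expansion
\begin{equation}
    (c+\tau v)_- = c_- - \tau v\,\mathbf 1_{\{c<0\}} + \tau v_-\mathbf 1_{\{c=0\}} + o(\tau)\qquad(\tau\downarrow0).
\end{equation}
The difference quotients are bounded by $|v|$. This is integrable, since $\mu_{A,B}$ has compact support and $\int|v|\meas<\infty$ by \cref{prop:basic}~(i) with $\alpha=1$. Dominated convergence therefore gives
\begin{equation}
    F'(0^+) = -2\int v\,\mathbf 1_{\{c<0\}}\meas + 2\int v_-\mathbf 1_{\{c=0\}}\meas .
\end{equation}
The matrix side is handled by first-order perturbation of the eigenvalues of $C+\tau V$ (Hellmann--Feynman, or \cref{lem:trace-deriv} applied on a neighbourhood of the negative spectrum). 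Let $P_-=\id_{\{C<0\}}$ and $P_0$ the kernel projection of $C$. The eigenvalues bifurcating from $0$ move with the eigenvalues of the compression $P_0VP_0$. Hence
\begin{equation}
    F'(0^+) = -\tr[VP_-] + \tr[(P_0VP_0)_-].
\end{equation}

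The step I expect to be the main obstacle is separating the strict indicator from the kernel term. Equating the two formulas gives only one relation that mixes the bulk term and the $\{c=0\}$ term. The left-derivative gives a second relation:
\begin{equation}
    F'(0^-) = -\tr[VP_-] - \tr[(P_0VP_0)_+]
    \quad\text{and}\quad
    F'(0^-) = -2\int v\mathbf 1_{\{c<0\}}\meas - 2\int v_+\mathbf 1_{\{c=0\}}\meas .
\end{equation}
These two relations still leave one unknown too many. To close the system I would use the freedom in $V$.

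First take $V=B\ge0$. Then $v\ge0$ on the support by \cref{prop:basic}~(ii), and $P_0BP_0\geq 0$. Subtracting the two one-sided relations gives
\begin{equation}
    \tr[P_0BP_0]=2\int b\,\mathbf 1_{\{c=0\}}\meas .
\end{equation}
Next observe that on $\{c=0\}$ we have $a=sb$, so $v=(xs+y)b$ there. Likewise $P_0AP_0=sP_0BP_0$, because $P_0CP_0=0$. Therefore the kernel terms on both sides are scalar multiples, by the same factor $(xs+y)_\pm$, of the two quantities $\tr[P_0BP_0]$ and $2\int b\,\mathbf 1_{\{c=0\}}\meas$, which we have just shown to be equal. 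Cancelling them yields
\begin{equation}
    \tr[V\id_{\{A<sB\}}]=2\int v\,\mathbf 1_{\{a<sb\}}\meas .
\end{equation}

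The case $\{A>sB\}$ then follows in the same way from $\tr[C+\tau V]_+$. Alternatively, subtract the $<$ identity and the kernel identity from $\tr V = 2\int v\meas$, which holds by \cref{thm:tracialjoint}. The $\geq$ versions follow by adding the kernel identity just established.
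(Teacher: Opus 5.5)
Your proof is correct, but it is organized differently from the paper's and does more work than needed.

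The paper differentiates only in the direction $B$. It takes one-sided derivatives in $s$ of $\tr[A-sB]_+=2\int(a-sb)_+\meas$. Since $B\geq0$, and $b\geq0$ on the support, the right derivative has no kernel contribution on either side, so $\tr[B\id_{\{A>sB\}}]=2\int b\,\mathbf{1}_{\{a>sb\}}\meas$ comes out at once; the left derivative gives the $\geq$ version. The passage to arbitrary $xA+yB$ is then pure linear algebra. It uses $xA+yB=x(A-sB)+(xs+y)B$ together with $\tr[(A-sB)\id_{\{A>sB\}}]=\tr[A-sB]_+$ and its scalar analogue.

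You instead differentiate along a general direction $V$. This forces you to track how the zero eigenvalues split according to the compression $P_0VP_0$, whose eigenvalues can have either sign. You then need the left derivative to isolate $\tr[P_0BP_0]=2\int b\,\mathbf{1}_{\{c=0\}}\meas$. Finally you cancel the kernel terms using $P_0AP_0=sP_0BP_0$. In effect, you apply the linear relation between $A$ and $B$ on the kernel block, where the paper applies it on a spectral block of $A-sB$.

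Your $V=B$ case already contains the paper's argument. There both kernel terms in the right-derivative relation vanish, so it gives $\tr[B\id_{\{A<sB\}}]=2\int b\,\mathbf{1}_{\{a<sb\}}\meas$ directly. The decomposition above, together with $\tr[(A-sB)\id_{\{A<sB\}}]=-\tr[A-sB]_-$, then finishes the proof. You would not need the general degenerate-perturbation formula or the second relation.

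What your route buys is the full one-sided directional derivative of $\tr[\cdot]_\pm$ along the pencil, matched on the matrix and scalar sides. It also gives an explicit kernel identity $\tr[VP_0]=2\int v\,\mathbf{1}_{\{a=sb\}}\meas$ for the $\geq$ versions. What it costs is reliance on degenerate eigenvalue perturbation theory. The paper only needs the elementary Danskin-type derivative of $\max_{0\leq T\leq\id}\tr[T(A-sB)]$ in a positive direction.
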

\begin{proof}
    By~\cref{prop:basic}, we have 
        $\tr[A-sB]_+ = 2 \int (a-sb)_+ \meas.$
    Since $b\geq0$ and $\int b \, \di \mu < \infty$, its right derivative on the scalar side is $-2\int b \mathbf{1}_{\{a>sb\}} \, \di \mu$. On the matrix side, one gets
        $-\tr[B \id_{\{A>sB\}}] $
    for the right derivative. 
    This can be seen by using $\tr[X]_+ = \max_{0\leq T\leq\id}\tr[TX]$. Maximizers are of the form $T=\id_{\{X>0\}} + K$, where $\id_{\{X>0\}}$ is the projector onto the strict positive part of $X$ and $K$ a projector onto $\ker X$. The usual derivative would lead to $-\tr[(\id_{\{X>0\}} + K)B]$, which is not defined. But since we are taking the right derivative, (which is the largest of all these terms), we obtain  $-\tr[B \id_{\{A>sB\}}]$. Similarly the left derivative gives, $-\tr[B \id_{\{A\geq sB\}}] = -2\int b \mathbf{1}_{\{a\geq sb\}} \, \di \mu$. 
    To obtain the expression with $xA +yB$, note that $xA+yB = x(A-sB)+(xs+y)B$. Then
    \begin{align}
        \tr[(xA+yB)\id_{\{A>sB\}}] &= x\tr[(A-sB)\id_{\{A>sB\}}]+(xs+y)\tr[B\id_{\{A>sB\}}] \\
        &= x \tr[A-sB]_+ + (xs+y)\tr[B\id_{\{A>sB\}}] \\
        &= x \int 2(a-sb)_+ \, \di \mu + (xs+y) \int 2b \mathbf{1}_{\{a>sb\}} \, \di \mu \\
        &= 2\int (xa+yb) \mathbf{1}_{\{a>sb\}} \, \di \mu 
    \end{align}
    An analogous argument handles the reversed inequality.
\end{proof}

\begin{lemma}[(iii)]\label{lem:expdict}
Let $A,B\geq0$, $s>0$ and define $K_s[X] = \int_0^1 (1-r)\,\ee^{-rsX}\,\di r$ for Hermitian $X$.
Then
\begin{equation}\label{eq:expdict}
    \int_{\mathbb{R}^2} a\, \ee^{-sb} \meas = \tr\bigl[A\, K_s[B]\bigr] , \qquad
    \int_{\mathbb{R}^2} a\, \ee^{-sa} \meas = \tr\bigl[A\, K_s[A]\bigr] .
\end{equation}
\end{lemma}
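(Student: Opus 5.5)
Apply \cref{thm:tracialjoint} to $f(t)=\ee^{-st}$ and differentiate in a direction, as in the proofs of \cref{lem:alogb} and \cref{lem:logderiv}. Since $f(0)\neq0$, use instead $f_s(t)=t(1-\ee^{-st})$ or, more simply, $g(t)=t\,\ee^{-st}$. This $g$ is continuous, satisfies $g(0)=0$, and $g(t)/t=\ee^{-st}$ is locally integrable, so it is admissible. We get
\begin{equation}
    H(g)(z)=\int_0^1\frac{1-t}{t}\,zt\,\ee^{-szt}\,\di t = z\int_0^1(1-t)\ee^{-stz}\,\di t = z\,k_s(z),
\end{equation}
with $k_s(z)=\int_0^1(1-r)\ee^{-rsz}\di r$. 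Thus $H(g)(X)=X K_s[X]$, and \cref{thm:tracialjoint} gives, for all $x,y\in\R$,
\begin{equation}
    \tr\bigl[(xA+yB)K_s[xA+yB]\bigr]=\int (xa+yb)\,\ee^{-s(xa+yb)}\meas .
\end{equation}

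The second identity follows directly by taking $(x,y)=(1,0)$: the left side is $\tr[AK_s[A]]$ and the right side is $\int a\ee^{-sa}\meas$.

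For the first identity, differentiate in $x$ at $(x,y)=(0,1)$. On the matrix side, set $F(z)=z k_s(z)=H(g)(z)$, which is smooth on $\R$. By \cref{lem:trace-deriv}, $\frac{\di}{\di x}\tr[F(B+xA)]|_{x=0}=\tr[F'(B)A]$. On the scalar side, differentiating $(xa+b)\ee^{-s(xa+b)}$ at $x=0$ gives $a\ee^{-sb}-s\,ab\,\ee^{-sb}$. The exchange of derivative and integral is justified by dominated convergence: $\mu_{A,B}$ has compact support (\cref{prop:basic}(ii)), and for $|x|\le1$ the derivative $a(1-s z)\ee^{-sz}$ with $z=xa+b$ is bounded by $C|a|$, which is integrable because $\int|a|\,\di\mu=\tr|A|/2<\infty$ by \cref{prop:basic}(i). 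This yields
\begin{equation}
    \tr[AF'(B)]=\int a(1-sb)\ee^{-sb}\meas .
\end{equation}

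The main obstacle is that $F'(z)=k_s(z)+zk_s'(z)$ is not $k_s$ itself, so the extra term $-s\int ab\,\ee^{-sb}\di\mu$ and $\tr[A\,Bk_s'(B)]$ have to be matched. Try to choose the test function so that the derivative lands exactly on $k_s$. Since $\frac{\di}{\di x}\tr[H(f)(B+xA)]=\tr[A\,H(f)'(B)]$ and differentiating the scalar side gives $\int a f'(b)\,\di\mu$, we need $H(f)'=k_s$ with $f'(t)=\ee^{-st}$. Hence take $f(t)=(1-\ee^{-st})/s$. It is admissible: $f(0)=0$ and $f(t)/t$ is bounded near $0$. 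Its transform is $H(f)(z)=\int_0^1\frac{1-t}{t}\,\frac{1-\ee^{-szt}}{s}\,\di t$, so
\begin{equation}
    H(f)'(z)=\int_0^1\frac{1-t}{t}\,t\,\ee^{-szt}\,\di t=k_s(z),
\end{equation}
where differentiation under the integral is fine because $t\ee^{-szt}/t$ is bounded. The same dominated-convergence argument, now with the bound $|a f'(xa+b)|\le C|a|$, gives
\begin{equation}
    \tr[A K_s[B]]=\int a\,\ee^{-sb}\meas .
\end{equation}
Applying this with the roles of $A$ and $B$ replaced by $A,A$, i.e.\ differentiating at $(1,0)$ in direction $A$, recovers the second identity as well. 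Use this cleaner route for both identities and keep the $t\ee^{-st}$ computation only as a consistency check.
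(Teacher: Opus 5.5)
Your proposal is correct. Once you settle on $f(t)=(1-\ee^{-st})/s$ with $H(f)'=k_s$, it is exactly the paper's argument: apply \cref{thm:tracialjoint} along $B+xA$ (resp.\ $(1+x)A$) and differentiate at $x=0$, using \cref{lem:trace-deriv} on the trace side and dominated convergence via compact support on the scalar side. Your side observation is also correct and slightly more economical: $g(t)=t\,\ee^{-st}$ evaluated at $(x,y)=(1,0)$ gives the second identity directly, with no differentiation.
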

\begin{proof}
Define $f_s(t) = \frac{1-\ee^{-st}}{s}$,
which fulfills the requirements of~\cref{thm:tracialjoint} and satisfies $f_s'(t)=\ee^{-st}$. Differentiation of its $H$-transform gives
\begin{equation}
    (H(f_s))'(z)=\int_0^1(1-r)\ee^{-rsz}\,\di r,
    \qquad (H(f_s))'(X)=K_s[X].
\end{equation}
So if we apply~\cref{thm:tracialjoint} with this function
to $B+t A$ and differentiate at $t=0$, we obtain the statement. On the scalar side, bounded support (see~\cref{prop:basic}) guarantees dominated convergence. On the trace side apply~\cref{lem:trace-deriv}.  Applying the same argument to
$(1+t)A$ proves the second identity.
\end{proof}

\section{Scalar identities}
\label{sec:appscalar}
In the following, we prove the scalar identities of~\cref{lem:scalar-integrals}.

\begin{proof}[Proof of~\cref{eq:rep1}]
    Define for $r>0$
    \begin{equation}
        \psi(r) = \int_0^1 \frac{(r-s)_-}{s} \, \di s + \int_1^\infty \frac{(r-s)_+}{s} \, \di s \, .
    \end{equation}
    Differentiating $\psi(r)$ gives $\psi'(r) = -\int_0^1 \frac{\mathbf{1}_{\{r<s\}}}{s} \di s + \int_1^\infty \frac{\mathbf{1}_{\{s<r\}}}{s} \di s = \log r$. Since $\psi(1)=0$, we get $\psi(r) = \int_1^r \log s \, \di s=r\log r -r +1$. Now take $r=a/b$ and multiply the expression by $b$. This leads to
    \begin{equation}
        a \log \frac{a}{b} = a-b+\int_0^1 \frac{(a-sb)_-}{s}\di s + \int_1^\infty \frac{(a-sb)_+}{s} \di s \,.
    \end{equation}
    If $a=0<b$, the first integral equals $b$ and the second vanishes. 
\end{proof}

\begin{proof}[Proof of~\cref{eq:rep2}]
    Substitute $t=1/s$ in the first integral of~\cref{eq:rep1}. This gives
    \begin{equation}
        \int_0^1 \frac{(a-sb)_-}{s} \, \di s  = \int_1^\infty \frac{(a-b/t)_-}{t} \, \di t = \int_1^\infty \frac{(b-ta)_+}{t^2} \, \di t \, .
    \end{equation}
\end{proof}
\begin{proof}[Proof of~\cref{eq:rep3}]
    Note, that for $a,b\geq 0$, the integrand vanishes for $s\in(0,1)$. To simplify upper and lower part, define $t=s/(s-1)$. Then for $s<0$, we get
    \begin{equation}
        \int_{-\infty}^0 \frac{\bigl((1-s)a+sb\bigr)_-}{|s|(s-1)^2} \, \di s = \int_0^1 \frac{(a-tb)_-}{t} \, \di t \, .
    \end{equation}
    by substitution. Similar, for $s>1$, we have
    \begin{equation}
        \int_{1}^\infty \frac{\bigl((1-s)a+sb\bigr)_-}{|s|(s-1)^2} \, \di s = \int_1^\infty \frac{(a-tb)_+}{t} \, \di t \, .
    \end{equation}
    \cref{eq:rep3} follows then from~\cref{eq:rep1}.
\end{proof}

\begin{proof}[Proof of~\cref{eq:rep4}]
    Let $a,b>0$ and define $\psi(t) = t\log t$. Then Taylor's formula gives us
    \begin{equation}
        \psi(a) - \psi(b) - \psi'(b)(a-b) = (a-b)^2 \int_0^1 (1-s) \psi''(b+s(a-b)) \, \di s \, .
    \end{equation} 
    Since $\psi'(b) = 1 + \log b$ and $\psi''(t)=1/t$ the claim follows. For $a=0<b$, the integral equals $b$ which makes the whole expression vanish.
\end{proof}

\begin{proof}[Proof of~\cref{eq:rep5}]
    For $a\geq0,b>0$, we have 
    \begin{equation}
        b\int_0^\infty (1+\log s) \mathbf{1}_{\{a>sb\}} \, \di s = b \int_0^{a/b}(1+\log s)\, \di s = a \log \frac{a}{b} \, .
    \end{equation}
\end{proof}

\begin{proof}[Proof of~\cref{eq:rep6}]
    Let $a>b>0$ and note that $\ee^{-sb} - \ee^{-sa} = \int_b^a s \ee^{-st} \, \di t$. By Fubini-Tonelli, we then get 
    \begin{equation}
        \int_0^\infty \frac{\ee^{-sb} - \ee^{-sa}}{s} \, \di s = \int_b^a \left(\int_0^\infty \, \ee^{-st} \di s \right)\, \di t = \int_b^a \frac{1}{t} \, \di t = \log \frac{a}{b}
    \end{equation}
    Fubini-Tonelli was applicable because the integrand was positive. For $0<a<b$, interchange $a$ and $b$ and change the sign. Multiplying by $a$ gives the result. The case $a=0<b$ leads to 0.
\end{proof}

\section{Testing geometry, curvature, and stability}
\label{sec:apptesting}
\subsection{Proof of the curvature criterion}
Let $X,Y\in M_n(\mathbb{C})$ be Hermitian and denote eigenvalues and vectors of $X$ as $\lambda_i$ and $u_i$, respectively.
Furthermore, let $g$ be twice continuously differentiable near the
spectrum of $X$. Then one has
\begin{equation}\label{eq:second-derivative}
    \left.\frac{\di^2}{\di s^2}\tr g(X+sY)\right|_{s=0}
    =\sum_{i,j}
      \frac{g'(\lambda_i)-g'(\lambda_j)}{\lambda_i-\lambda_j}
      |\langle u_i,Yu_j\rangle|^2,
\end{equation}
by differentiating the divided-difference formula also used in~\cref{lem:trace-deriv}. When $\lambda_i = \lambda_j$, the term is understood as $g''(\lambda_i).$

\begin{proof}[Proof of \cref{thm:curvature}]
By~\cref{eq:stoploss}, we have
\begin{equation}
    h(t) = \tr[\rho-t\sigma]_+ = \int_0^\infty (r-t)_+\,\di \nu(r) \, ,
\end{equation}
and taking right and left derivative gives by~\cref{thm:diffdict} $h'_+(t) = - \tr[\sigma \mathbf{1}_{\{\rho>t\sigma\}}]$ and $h'_-(t) = - \tr[\sigma \mathbf{1}_{\{\rho\geq t\sigma\}}]$. Therefore $h'_+(t)-h'_-(t) = \tr[\sigma P_{\ker(\rho-t\sigma)}]$ and since $\sigma>0$, this difference is strictly positive whenever $\rho-t\sigma$ is singular. Singular threshold-values are therefore corners of the testing curve. Because of this, in the following we focus on $t>0$ for which $X=\rho-t\sigma$ is invertible. 
Choose a smooth function $g$ which agrees with $x\mapsto x_+$ on a neighborhood of the spectrum of $X$. For sufficiently small $s$, the spectrum of $X-s\sigma$ remains in this neighborhood and we have $h(t+s) = \tr[g(X-s\sigma)]$. 
With $X= \sum_i \lambda_i \ket{u_i}\bra{u_i}$,~\cref{eq:second-derivative} gives
\begin{equation}\label{eq:curvature2}
    h''(t) = 2 \sum_{\lambda_i > 0 > \lambda_j} \frac{|\langle u_i,\sigma u_j \rangle|^2}{\lambda_i-\lambda_j} \, .
\end{equation}

Now, define the projector $P_t=\id_{\{\rho-t\sigma>0\}}$. Since every summand in~\cref{eq:curvature2} is nonnegative, $h_{\rho,\sigma}''(t)=0$ is equivalent to $P_t\sigma(\id-P_t)=0$.
Because $\sigma$ is Hermitian, this means that $P_t$ makes $\sigma$ block diagonal.
Trivially, it also block diagonalizes $\rho-t\sigma$, and hence it also block diagonalizes $\rho$. 

Thus, if $X$ has both positive and negative eigenvalues, zero curvature produces a nontrivial common block decomposition of $\rho$ and $\sigma$.

This allows us to decompose the full Hilbert space. Whenever a nontrivial orthogonal projection commutes
with both $\rho$ and $\sigma$, we split it. Finite dimension of the Hilbert spaces ensures that this process ends with an orthogonal direct sum decomposition of minimal common blocks. So $\rho = \bigoplus_k \rho_k$, $\sigma = \bigoplus_k \sigma_k$ and $h(t) = \sum_k h_k(t)$ with $h_k(t) = \tr[\rho_k-t \sigma_k]_+$.
Note, that each $h_k$ is convex since 
\begin{equation}
    h_k(t) = \max_{0\leq T \leq \id_k} \tr[T\rho_k] - t \, \tr[T\sigma_k]
\end{equation}
is a maximum of affine functions of t. 

Consider now a minimal block $\cH_k$ of dimension greater than one and define $C_k=\sigma_k^{-1/2}\rho_k\sigma_k^{-1/2}$.
Note, that $\rho_k-t \sigma_k$ is singular precisely when $t$ is an eigenvalue of $C_k$. 
Let $t_-$ and $t_+$ be smallest and largest eigenvalue of $C_k$. By assumption, they have to be different. Otherwise $C_k=c\id$ and therefore
$\rho_k=c\sigma_k$. Then the span of any eigenvector of $\sigma_k$ would be a smaller common block,
contradicting $\cH_k$ being minimal.

Now we show, that for any $t\in(t_-,t_+)$, the matrix $\rho_k - t\sigma_k$ has both positive and negative eigenvalues. This can be seen by choosing unit eigenvectors
$\ket{v_\pm}$ of
$C_k$ corresponding to $t_\pm$. Furthermore, define $\ket{w_\pm}=\sigma_k^{-1/2}\ket{v_\pm}$. Then
\begin{equation}
    \bra{w_\pm}\rho_k-t\sigma_k\ket{w_\pm} =\bra{v_\pm}C_k-t\id_k\ket{v_\pm} = t_\pm -t \, ,
\end{equation}
which clearly takes positive and negative values. This means that there exists a non-trivial positive and negative spectral projection.
If additionally $t$ is not an eigenvalue of $C_k$, then $h''_k(t)=0$ would imply that the positive spectral projection of $\rho_k-t\sigma_k$ gives a new nontrivial common block decomposition of $\rho_k$ and $\sigma_k$ which contradicts $H_k$ being minimal. Therefore $h_k''(t)>0$ for every $t\in(t_-,t_+)\setminus \spec(C_k)$ which makes it strictly convex.

We can now prove (i)$\Leftrightarrow$(ii). If $\rho$ and $\sigma$ commute, choose a common eigenbasis, with eigenvalues $p_i$ and $q_i>0$. Then $h_{\rho,\sigma}(t)=\sum_i(p_i-tq_i)_+$,
which has finitely many affine pieces.
Conversely, suppose that $h_{\rho,\sigma}$ has finitely many affine pieces. If a minimal common block had dimension greater than one, its contribution $h_k$ would be strictly convex on the nonempty interval $(t_-,t_+)$. All other block contributions are convex, so their sum $h$ would also be strictly convex on that interval. This is impossible for a function consisting of finitely many affine pieces. Hence all minimal common blocks are one-dimensional and thus give a common eigenbasis for $\rho$ and $\sigma$.

To prove (ii)$\Leftrightarrow$(iii), we recall that by the one-sided derivative of~\cref{eq:stoploss} seen earlier during the proof and in~\cref{rem:layercakef}, we have
\begin{equation}
    \nu[(a,b)] = \nu[(a,\infty)] - \nu([b,\infty)] = h'_-(b) - h'_+(a)
\end{equation}
which means that if $h$ is affine with slope $m$ on an open interval $I$, then, for every $a<b$ in $I$, $\nu[(a,b)]=m-m=0$ and therefore $\nu(I)=0$. On the other hand, if $\nu(I)=0$, then $\nu[(a,b]) = 0 = h'_+(b) - h'_+(a)$ for all $a<b$ in $ I$, which means that $h'_+$ is constant and $h$ affine there. Now suppose that $h$ has finitely many affine pieces with breakpoints $0<t_1<\dots<t_N<\infty$. Then the previous argument shows that $\nu[ [0,\infty)\setminus \{0,t_1,\dots,t_N\}] = 0$ and therefore that it has finite support. For the other direction, suppose that $\nu$ has finite support and that we therefore can write $\nu = \sum_i v_i \delta_{r_i}$. Then by~\cref{eq:stoploss}, we have $h(t) = \sum_i v_i (r_i-t)_+$, which has finitely many affine pieces.

We continue with (i)$\implies$(iv). If $\rho$ and $\sigma$ commute, then a quantum test $T$ contributes to the testing probabilites only through its diagonal entries
$t_i=\bra{i}T\ket{i}\in[0,1]$. Assuming eigenvalues $p_i$ and $q_i$ for $\rho$ and $\sigma$, every choice of $t_i$ defines a valid diagonal test and we get $\cT_{\rm q}(\rho,\sigma) = \{ \sum_i t_i (p_i,q_i):0\leq t_i \leq 1 \}$, which is a polygon.

We end the proof with (iv)$\Rightarrow$(ii). By definition, every point in $\cT_{\rm q}(\rho,\sigma)$ is of the form $(p,q) = (\tr[T\rho],\tr[T\sigma])$ and therefore 
\begin{equation}
    h(t) = \max_{(p,q)\in\cT_{\rm q}(\rho,\sigma)} (p-tq) .
\end{equation}
If the testing region is a polygon with $N$ vertices $v_i=(a_i,b_i)$, then maximizing $p-tq$ over it gives 
\begin{equation}
    h(t) = \max_{1\leq i \leq N} (a_i-tb_i)
\end{equation}
which is a maximum over finitely many affine functions, which means $h$ has finitely many affine pieces.
\end{proof}

\bibliographystyle{arxiv_no_month}
\bibliography{bibliofile}

\end{document}